\documentclass[11pt]{article}

\newcommand{\by}{\mathbf{y}}
\usepackage{placeins}
\usepackage[margin=1in]{geometry}
\usepackage[T1]{fontenc}
\usepackage[utf8]{inputenc}
\usepackage{lmodern}
\usepackage{microtype}
\usepackage{amsmath,amssymb,amsfonts,amsthm,mathtools,bm}
\usepackage{aliascnt}
\usepackage{booktabs,array,tabularx,longtable}
\usepackage{enumitem}
\usepackage{graphicx}
\usepackage{verbatim}
\usepackage{rotating}
\usepackage{float}
\usepackage{xcolor}

\usepackage[round,authoryear]{natbib}
\usepackage[colorlinks=true,allcolors=blue]{hyperref}
\usepackage[nameinlink,noabbrev]{cleveref}

\newtheorem{proposition}{Proposition}[section]
\newaliascnt{theorem}{proposition}
\newtheorem{theorem}[theorem]{Theorem}
\aliascntresetthe{theorem}
\newaliascnt{corollary}{proposition}
\newtheorem{corollary}[corollary]{Corollary}
\aliascntresetthe{corollary}
\newaliascnt{remark}{proposition}

\aliascntresetthe{remark}
\theoremstyle{definition}
\newaliascnt{definition}{proposition}

\aliascntresetthe{definition}
\crefname{theorem}{theorem}{theorems}
\Crefname{theorem}{Theorem}{Theorems}
\crefname{corollary}{corollary}{corollaries}
\Crefname{corollary}{Corollary}{Corollaries}

\newcommand{\R}{\mathbb{R}}
\newcommand{\E}{\mathbb{E}}
\newcommand{\Var}{\operatorname{Var}}
\newcommand{\Cov}{\operatorname{Cov}}
\newcommand{\Corr}{\operatorname{Corr}}
\newcommand{\Prb}{\mathbb{P}}
\newcommand{\Dir}{\operatorname{Dirichlet}}
\newcommand{\Cat}{\operatorname{Categorical}}
\newcommand{\diag}{\operatorname{diag}}
\newcommand{\KL}{\operatorname{KL}}
\newcommand{\trans}{\mathsf{T}}
\newcommand{\bpi}{\boldsymbol{\pi}}
\newcommand{\bp}{\boldsymbol{p}}
\newcommand{\bY}{\mathbf{Y}}
\newcommand{\bx}{\mathbf{x}}
\newcommand{\be}{\boldsymbol{e}}
\newcommand{\bmu}{\boldsymbol{\mu}}
\newcommand{\btheta}{\boldsymbol{\theta}}
\newcommand{\bSigma}{\boldsymbol{\Sigma}}
\newcommand{\cH}{\mathcal{H}}
\newcommand{\simplex}{\Delta_{K-1}}

\title{\textbf{A Dirichlet Mixed-Membership Model for Exact Multivariate Distributional Credibility}}
\author{
    Sebasti{\'a}n Calcetero Vanegas\thanks{Independent Researcher, Toronto, Canada. Email: \texttt{{sebastian.calcetero@alumni.utoronto.ca}}}
    \and
    Ian Weng Chan\thanks{Division of Finance, Nanyang Business School, Nanyang Technological University, Singapore. Email: \texttt{ianweng.chan@ntu.edu.sg}}
}
\date{\today}

\begin{document}
\maketitle

\begin{abstract}
Credibility theory combines individual experience with portfolio information for insurance pricing, but classical formulations focus primarily on conditional means and expected premiums. We propose a Dirichlet mixed-membership model (DMMM) for multivariate distributional credibility. Policyholder risk is represented by a stable composition over latent risk classes: baseline characteristics determine its a priori assessment, while repeated multivariate experience progressively updates its a posteriori assessment. Support-specific expert distributions accommodate heterogeneous outcomes, with dependence induced through the shared latent structure. The resulting posterior predictive distribution can be written exactly as a convex combination of portfolio and experience components, extending the familiar credibility-factor structure beyond the conditional mean. We study the framework through a simulation experiment and a vehicle telematics application. The simulation shows that relatively simple experts can capture complex multivariate insurance distributions and learn policyholder-specific risk as experience accumulates. In the telematics application, recent claims and driving behaviour provide complementary information for future claim-frequency prediction, allowing similar policyholders to receive different experience-rated assessments. The DMMM therefore provides a flexible and interpretable framework for combining heterogeneous insurance experience while retaining the structure of classical credibility.
\end{abstract}

\noindent\textbf{Keywords:} credibility theory; multivariate distributional credibility; Dirichlet distribution; finite mixtures; vehicle telematics.

\newpage
\section{Introduction}\label{sec:intro}

Credibility theory provides a coherent framework for incorporating a policyholder's emerging experience into risk assessment and pricing. Individual histories are directly relevant to future risk but are often too short or sparse to support reliable estimation on their own; portfolio information is more stable but less specific to the individual. Credibility reconciles these two sources by borrowing strength from the portfolio while allowing accumulated experience to progressively influence the policyholder's assessment. This principle underlies experience rating and a posteriori ratemaking and remains central to actuarial practice. The classical foundations were established by \citet{Buhlmann1967}, \citet{BuhlmannStraub1970}, \citet{Jewell1974}, and \citet{Jewell1975}; see also \citet{Norberg1979} and \citet{BuhlmannGisler2005} for further developments and broader treatments of credibility theory.

A key strength of classical credibility is its combination of statistical interpretation and analytical simplicity. A hierarchical representation of policyholder heterogeneity leads naturally to a balance between portfolio information and individual experience, often expressed as a convex combination governed by a credibility factor. Much of the classical development, however, centres on conditional means and expected premiums, with claim history providing the principal source of a posteriori information. Several extensions have broadened this framework. Multivariate credibility permits information to be shared across claim frequency and severity, multiple lines of business, or several related responses \citep{Frees2003,EnglundEtAl2008,FreesValdez2008}, while distributional credibility seeks to update the underlying risk distribution rather than only its first moments; see, for example, \citet{CaiEtAl2015}. Related Bayesian approaches have pursued greater flexibility in both latent-risk modelling and posterior prediction, allowing experience to inform a richer predictive distribution than is available from moment-based credibility alone \citep{CheungEtAl2021}.

Bringing multivariate and distributional credibility together is less straightforward. Doing so requires a model that can accommodate responses on different supports, represent dependence across them, and allow the joint predictive distribution to change flexibly as experience accumulates, while still making clear how portfolio and individual information are being combined. Richer latent and dependence structures can also lead to substantial computational demands and may obscure the transparent portfolio-versus-experience interpretation that makes credibility attractive. Computational approaches based on surrogate modelling and population sampling have been proposed to make experience rating under complex hierarchical models more practical \citep{CalceteroVanegasEtAl2024,CalceteroVanegasLin2024}, but computational tractability alone does not resolve these modelling and interpretability issues. The central challenge is therefore to achieve multivariate and distributional flexibility without giving up a tractable and interpretable credibility update.

These developments are particularly relevant as insurers observe increasingly rich forms of policyholder information. Usage-based motor insurance (UBI) provides a prominent example: telematics systems record mileage, driving times, acceleration and braking behaviour, turning patterns, and related measures that may reveal aspects of risk not apparent from a short claim history. Because these signals are observed while the policy is in force, they naturally constitute additional a posteriori information. \citet{DenuitGuillenTrufin2019} developed a multivariate credibility framework combining claim frequencies and behavioural signals, and related longitudinal and dynamic models have subsequently been studied by \citet{CorradinEtAl2022}, \citet{HenckaertsAntonio2022}, and \citet{YanezEtAl2025}. These contributions demonstrate the value of borrowing information across outcomes, but flexible approaches that allow heterogeneous experience to update a complete multivariate predictive distribution remain comparatively limited.

Motivated by this gap, we develop a Dirichlet mixed-membership model (DMMM) based on a finite collection of latent risk classes. Rather than assigning each policyholder permanently to one class, the underlying policyholder risk is represented by a stable composition over these classes. Baseline attributes determine the portfolio-level assessment of this composition, while different periods may express different classes and repeated observations progressively refine the insurer's posterior assessment of the individual composition. Each class is described by a multivariate expert distribution, so different response components may be assigned support-appropriate models for counts, losses, severities, proportions, or continuous behavioural signals. Dependence across outcomes is induced through their shared latent risk classes, avoiding the need to impose a common marginal family, a separate copula, or a rigid parametric covariance structure. By increasing the number of classes, the model can represent increasingly rich multivariate conditional distributions.

Importantly, the additional flexibility does not come at the expense of the classical credibility interpretation. Posterior risk-class weights admit an exact convex-combination decomposition between a portfolio classification and an experience-based classification, with the balance governed by the strength of the prior and the amount of observed history. The same structure carries through to the complete future predictive distribution. Consequently, the expected-loss premium retains the familiar form of a credibility-weighted combination of a priori and experience-based assessments. More generally, the full predictive distribution provides the basis for computing tail probabilities, quantiles, premiums under nonlinear principles, and other actuarial risk measures when required. The DMMM therefore extends the classical portfolio-versus-experience logic from expected premiums to flexible multivariate predictive distributions.

The DMMM is also closely related to the literature on mixture models for credibility and a posteriori ratemaking. Finite-mixture models represent heterogeneous insurance portfolios through latent subpopulations, with individual experience changing the relative importance of mixture components and hence the resulting experience-rated premium \citep{TzougasVrontosFrangos2014,JahanbaniPayandehMasoumifard2024,AbedPayandehNajafabadi2026}. Bayesian approaches introduce further flexibility through finite- or infinite-mixture structures for the underlying risk or claim distribution. These include Bayesian mixture credibility and nonparametric formulations based on Dirichlet processes, which yield flexible credibility premiums and, in some cases, complete posterior predictive distributions \citep{LauSiuYang2006,SiuYang2009,HongMartin2017,ZhangHuangWu2017}. More recently, latent-state and mixture-of-experts models have incorporated covariates and longitudinal heterogeneity in different ways. \citet{Verschuren2022} considers dynamically evolving latent risk classes for joint frequency--severity experience rating, while \citet{TseungChanFungBadescuLin2023} combines covariate-based mixture-of-experts classification with policyholder-specific random effects.

The DMMM uses these ingredients differently. Its persistent policyholder risk is the composition over latent classes itself, rather than a single fixed or dynamically evolving class. Period-specific classes are repeatedly drawn from this composition, so heterogeneous multivariate outcomes jointly provide information about the same underlying individual risk. The role of the Dirichlet distribution also differs from that in Bayesian nonparametric mixture models: in the canonical DMMM, population-level parameters are estimated frequentistically, while the Dirichlet distribution describes heterogeneity in policyholder-specific risk compositions rather than acting as a prior on an unknown claim or parameter distribution. In \citet{TseungChanFungBadescuLin2023}, persistent heterogeneity is instead introduced through policyholder-specific random effects in the gating structure and posterior inference requires numerical approximation. The DMMM therefore differs in where the persistent latent risk enters the model and in how that risk is updated through repeated experience in closed form.

We study the framework in two complementary numerical settings. First, a simulation experiment considers joint claim frequency, severity, and losses under a data-generating mechanism deliberately more complex than the fitted expert families. It examines the model's ability to recover relevant marginal and dependence features and to learn policyholder-specific predictive distributions under model misspecification. Second, we consider claims and telematics using the synthetic portfolio of \citet{SoBoucherValdez2021}, extended to a controlled longitudinal setting motivated by the SEAT framework of \citet{SoJeong2025}. The application examines whether behavioural experience provides predictive information beyond conventional rating characteristics and past claims, and how the resulting experience updates compare with multivariate Gaussian random-effects and linear credibility approaches.

Taken together, the paper makes three related contributions. First, we develop a multivariate distributional credibility framework in which heterogeneous outcomes jointly update a stable policyholder-specific composition over latent risk classes. Second, this construction yields an exact credibility decomposition for the complete posterior predictive distribution, while retaining sufficient flexibility to approximate complex multivariate risks and learn individual predictive distributions from repeated experience. Third, we develop a practical estimation and inference procedure and study the framework both under controlled model misspecification and in a telematics application, where claims and behavioural information jointly enter experience rating. The contribution lies in combining these features within a single model that retains the transparent portfolio-versus-experience structure of classical credibility, in a mathematically tractable way.

The remainder of the paper is organised as follows. \Cref{sec:model} introduces the model, derives its posterior and predictive quantities, and discusses its flexibility and learning properties. \Cref{sec:estimation} develops estimation and implementation through an EM-type procedure with an optional response-weighting mechanism for auxiliary signals. \Cref{sec:dmm-simulation-study,sec:application} present the simulation experiment and telematics application, respectively. \Cref{sec:conclusion} concludes.

\section{The Dirichlet Mixed-Membership Model (DMMM)}\label{sec:model}

The classical static credibility framework
\citep{BuhlmannGisler2005}
begins with a latent risk attached to each policyholder. In generic notation, one may write
\begin{equation}\label{eq:classical_credibility_framework}
\boldsymbol\zeta_i\sim G,
\qquad
\bY_{it}\mid\boldsymbol\zeta_i,\be_{it}
\sim F(\cdot\mid\boldsymbol\zeta_i,\be_{it}),
\qquad t=1,\ldots,T_i.
\end{equation}
The risk $\boldsymbol\zeta_i$ is fixed for policyholder $i$, but differs across the portfolio according to $G$. Given the risk, the observations are independent across periods after accounting for known period-level information (e.g., exposure) $\be_{it}$. Experience rating, or credibility, consists of updating the distribution of $\boldsymbol\zeta_i$ using the observed claim history, averaging $F(\cdot\mid\boldsymbol\zeta_i,\be_{i,T_i+1})$ over that updated distribution, and then using the resulting predictive distribution to determine premiums.

The proposed model specifies the two abstract objects in \cref{eq:classical_credibility_framework} in a way that is both flexible and operationally interpretable. Insurance companies often summarise heterogeneous portfolios through a finite set of internal risk classes. Such classes are useful because they can be monitored, explained, and translated into premiums or underwriting actions. We therefore take the latent risk to be a class-composition vector,
$\boldsymbol\zeta_i=\bp_i$. Its $k$th coordinate, $p_{ik}$, is the long-run proportion of periods in which policyholder $i$ expresses risk class $k$. The portfolio distribution $G$ is modelled by a Dirichlet distribution centred on a baseline classification obtained from observable attributes.

This construction does not force a policyholder to remain in one class in every period. Conditional on $\bp_i$, a period-specific class is drawn, after which the complete multivariate outcome is generated from the corresponding class distribution. Thus, the classes describe recognisable patterns of insurance experience, whereas $\bp_i$ is the stable policyholder-level risk that credibility seeks to learn. New observations do not change the underlying $\bp_i$; they change the insurer's posterior information about it.

\subsection{Mathematical formulation}\label{subsec:formulation}

Let $i=1,\ldots,n$ index policyholders and $t=1,\ldots,T_i$ index observation periods. The history length $T_i$ may differ across policyholders. Let $\bx_i\in\R^p$ contain baseline policyholder attributes used for the prior risk classification. Let $\be_{it}\in\mathcal E$ collect period-level information entering the response distributions, such as exposure, policy duration, or other variables known when period $t$ is evaluated. When only a scalar exposure is required, we write it simply as $e_{it}$. The likelihood and all individual predictions are conditional on the history and period-level information actually available for that policyholder; equal history lengths are not required.

The multivariate response of interest is
\[
\bY_{it}=(Y_{it1},\ldots,Y_{itd})^{\trans}\in\mathcal Y.
\]
For example,
\[
\bY_{it}=(N_{it},L_{it},\mathbf S_{it}^{\trans})^{\trans},
\]
where $N_{it}$ is claim count, $L_{it}$ is aggregate loss, and
$\mathbf S_{it}=(S_{it1},\ldots,S_{itd_S})^{\trans}$
contains behavioural or operational measurements. The same construction can accommodate, for example, several lines of business, multiple correlated claim frequencies or severities, and combinations of insurance outcomes with auxiliary behavioural signals.

Let $K$ be the number of latent risk classes. The hierarchical model is
\begin{align}
\bp_i\mid\bx_i
&\sim\Dir(\phi\bpi_i),
\label{eq:dirichlet_model}\\
C_{it}\mid\bp_i
&\overset{\mathrm{ind}}{\sim}\Cat(\bp_i),
\qquad t=1,\ldots,T_i,
\label{eq:class_model}\\
\bY_{it}\mid C_{it}=k,\be_{it}
&\sim F_k(\cdot\mid\be_{it};\btheta_k),
\qquad k=1,\ldots,K.
\label{eq:expert_model}
\end{align}
Here $\phi$ is a concentration parameter, $\btheta_k$ are the expert distribution parameters, and $\bpi_i=(\pi_{i1},\ldots,\pi_{iK})^{\trans}$ is given by regression functions
\begin{equation}\label{eq:softmax}
\pi_{ik}
=
\frac{\exp(\bx_i^{\trans}\boldsymbol\beta_k)}{\sum_{\ell=1}^{K}\exp(\bx_i^{\trans}\boldsymbol\beta_\ell)},
\qquad k=1,\ldots,K,
\end{equation}
with $\boldsymbol\beta_K=\mathbf0$ for identification. Write
$\mathbf B=(\boldsymbol\beta_1,\ldots,\boldsymbol\beta_{K-1})$ for the free gating coefficients. The global parameters are
$\Theta=(\mathbf B,\phi,\btheta_1,\ldots,\btheta_K)$.

Each layer has a distinct actuarial interpretation.
\begin{itemize}[leftmargin=2em]
    \item $\bpi_i$ is the portfolio classification implied by observable attributes, that is, the risk information available from those attributes. It is also the prior mean of the individual risk composition.
    \item $\phi>0$ is the prior concentration, or total prior pseudo-count. A large value produces stronger shrinkage toward $\bpi_i$; a small value permits greater individual heterogeneity.
    \item $\bp_i$ is drawn once and remains fixed over time. It represents the policyholder's long-run risk composition rather than a dynamically evolving state.
    \item $C_{it}$ is the class expressed in period $t$. Different periods can express different classes even though $\bp_i$ is fixed.
    \item $F_k$ is the class-specific expert distribution for the complete multivariate outcome. Its parameters may incorporate exposure and other period-level information through $\be_{it}$.
\end{itemize}

The Dirichlet moments are
\begin{equation}\label{eq:p_moments}
\E(\bp_i\mid\bx_i)=\bpi_i,
\qquad
\Var(\bp_i\mid\bx_i)
=
\frac{\diag(\bpi_i)-\bpi_i\bpi_i^{\trans}}{\phi+1}.
\end{equation}
Thus, $\bpi_i$, $\phi$, and $\bp_i$ play different roles: the first is the portfolio centre, the second controls the strength of that information, and the third is the latent individual risk being learned.

The Dirichlet distribution in \cref{eq:dirichlet_model}, parameterised as $\Dir(\phi\bpi_i)$, uses a single total concentration parameter $\phi$ and imposes the usual Dirichlet covariance structure among the class proportions. This is a deliberate modelling trade-off rather than a claim that every portfolio follows an exact Dirichlet law. In return, the model obtains an interpretable prior centre, a scalar credibility strength, tractable updating conditional on the latent classes, and the exact predictive decomposition derived below. The restriction does not make the predictive model intrinsically inflexible: \Cref{subsec:theory} shows that rich expert families retain broad marginal approximation power and that the posterior predictive distribution can approach the true policyholder distribution under suitable conditions.

The expert specification should respect the support and structural relationships of the insurance outcomes. A practical and useful factorisation is a conditional class independence structure
\begin{equation}\label{eq:expert_factorization}
f_k(n,\ell,\mathbf s\mid\be)
=
f_{N,k}(n\mid\be)
 f_{L,k}(\ell\mid n,\be)
 \prod_{j=1}^{d_S} f_{S_j,k}(s_j\mid \be).
\end{equation}

Although conditional independence within a class may appear restrictive, it does not imply marginal independence among the outcomes. As \cref{subsec:theory} establishes, a sufficiently rich collection of support-appropriate experts and latent classes can represent complex multivariate dependence through the shared class structure. The factorisation also makes estimation substantially more tractable by allowing response-specific expert parameters to be updated separately.

\subsection{Quantities of interest}\label{subsec:quantities}

This subsection derives the main quantities used for actuarial modelling: the marginal distribution, moments and dependence, posterior risk classification, and, most importantly, the model's exact credibility representation for posterior prediction and premiums.

\subsubsection{Marginal distribution, moments, and dependence}

For each expert, define
\[
\bmu_{it,k}=\E(\bY_{it}\mid C_{it}=k,\be_{it}),
\qquad
\bSigma_{it,k}=\Var(\bY_{it}\mid C_{it}=k,\be_{it}).
\]
For multivariate cumulative distribution functions, inequalities such as
$\bY_{it}\leq\mathbf y$ are understood componentwise.

\begin{proposition}[One-period distribution and repeated-observation moments]\label{prop:marginal_moments}
For every policyholder $i$ and period $t$, the marginal cumulative distribution function is
\begin{equation}\label{eq:marginal_distribution}
F_{it}(\mathbf y\mid\bx_i,\be_{it})
=
\sum_{k=1}^{K}\pi_{ik}
F_k(\mathbf y\mid\be_{it};\btheta_k).
\end{equation}
If the first two expert moments exist, then
\begin{equation}\label{eq:marginal_mean}
\E(\bY_{it}\mid\bx_i,\be_{it})
=
\sum_{k=1}^{K}\pi_{ik}\bmu_{it,k},
\end{equation}
\begin{align}
\Var(\bY_{it}\mid\bx_i,\be_{it})
={}&
\sum_{k=1}^{K}\pi_{ik}
\left(
\bSigma_{it,k}+\bmu_{it,k}\bmu_{it,k}^{\trans}
\right)-
\left(\sum_{k=1}^{K}\pi_{ik}\bmu_{it,k}\right)
\left(\sum_{k=1}^{K}\pi_{ik}\bmu_{it,k}\right)^{\trans},
\label{eq:marginal_variance}
\end{align}
and, for $s\neq t$,
\begin{align}
\Cov(\bY_{is},\bY_{it}\mid\bx_i,\be_{is},\be_{it})
&=
\frac{1}{\phi+1}
\left[
\sum_{k=1}^{K}\pi_{ik}\bmu_{is,k}\bmu_{it,k}^{\trans}
-
\left(\sum_{k=1}^{K}\pi_{ik}\bmu_{is,k}\right)
\left(\sum_{k=1}^{K}\pi_{ik}\bmu_{it,k}\right)^{\trans}
\right].
\label{eq:cross_covariance}
\end{align}
\end{proposition}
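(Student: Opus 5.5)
The plan is to condition on the latent quantities in two layers, first $\bp_i$ and then $C_{it}$, and apply the towers property together with Dirichlet moments \eqref{eq:p_moments}. All statements are conditional on $\bx_i$ and the relevant $\be_{it}$, which I suppress where harmless.

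\textbf{Marginal distribution and mean.} Since $C_{it}\mid\bp_i\sim\Cat(\bp_i)$, we have $\Prb(C_{it}=k\mid\bx_i)=\E(p_{ik}\mid\bx_i)=\pi_{ik}$. The conditional law \eqref{eq:expert_model} of $\bY_{it}$ given $C_{it}=k$ does not depend on $\bp_i$. Hence
\[
F_{it}(\mathbf y)=\sum_k \Prb(C_{it}=k)\,F_k(\mathbf y\mid\be_{it};\btheta_k),
\]
which is \eqref{eq:marginal_distribution}. The mean \eqref{eq:marginal_mean} follows by integrating against this mixture, which is legitimate once first moments exist.

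\textbf{Variance.} I will use $\Var(\bY)=\E(\bY\bY^{\trans})-\E\bY\,\E\bY^{\trans}$ with $\E(\bY_{it}\bY_{it}^{\trans}\mid C_{it}=k)=\bSigma_{it,k}+\bmu_{it,k}\bmu_{it,k}^{\trans}$, and mix over $\pi_{ik}$. This gives \eqref{eq:marginal_variance} directly. Only one period is involved, so $\phi$ does not appear.

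\textbf{Cross covariance.} This is the main step. For $s\neq t$, the key structural fact is that, given $\bp_i$, the pairs $(C_{is},\bY_{is})$ and $(C_{it},\bY_{it})$ are independent. This holds because the $C$'s are conditionally independent given $\bp_i$ and each $\bY$ depends only on its own class and period information; I read this as implicit in the hierarchical specification. Then
\[
\E(\bY_{is}\mid\bp_i)=\sum_k p_{ik}\bmu_{is,k}=:\mathbf M_s\bp_i,
\]
where $\mathbf M_s$ is the matrix with columns $\bmu_{is,k}$. The law of total covariance gives
\[
\Cov(\bY_{is},\bY_{it})=\E[\Cov(\bY_{is},\bY_{it}\mid\bp_i)]+\Cov(\mathbf M_s\bp_i,\mathbf M_t\bp_i).
\]
The first term vanishes by conditional independence. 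The second equals $\mathbf M_s\Var(\bp_i)\mathbf M_t^{\trans}$. Substituting \eqref{eq:p_moments} gives
\[
\frac{1}{\phi+1}\Bigl[\mathbf M_s\diag(\bpi_i)\mathbf M_t^{\trans}-\mathbf M_s\bpi_i\bpi_i^{\trans}\mathbf M_t^{\trans}\Bigr],
\]
and expanding the matrix products yields exactly \eqref{eq:cross_covariance}.

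The only point requiring care is justifying the conditional independence across periods given $\bp_i$. Without it, extra terms from within-class dependence across periods would appear, so I will state it explicitly as part of the model's generative structure. Existence of first moments of the experts suffices for the cross term, and the integrability of the products follows from that conditional independence.
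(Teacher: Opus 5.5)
Your proposal is correct and follows essentially the same route as the paper. The one-period law and moments come from mixing the experts with weights $\Prb(C_{it}=k\mid\bx_i)=\E(p_{ik}\mid\bx_i)=\pi_{ik}$, and the cross-period term comes from the law of total covariance given $\bp_i$ together with the Dirichlet covariance, where your matrix form $\mathbf M_s\Var(\bp_i)\mathbf M_t^{\trans}$ is just a compact way of doing the paper's substitution. Your explicit statement of conditional independence across periods given $\bp_i$, and your integrability remark, spell out assumptions that the paper uses only implicitly.
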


\begin{proof}
Conditioning first on the period-specific class and then averaging over the individual composition gives
\begin{align*}
F_{it}(\mathbf y\mid\bx_i,\be_{it})
&=
\E\left[
\sum_{k=1}^{K}p_{ik}F_k(\mathbf y\mid\be_{it})
\middle|\bx_i
\right]\\
&=
\sum_{k=1}^{K}\pi_{ik}F_k(\mathbf y\mid\be_{it}),
\end{align*}
which proves \cref{eq:marginal_distribution}. The mean and variance follow from the standard finite-mixture formulas.

For $s\neq t$, the outcomes are conditionally independent given $\bp_i$, and
\[
\E(\bY_{it}\mid\bp_i,\be_{it})
=
\sum_k p_{ik}\bmu_{it,k}.
\]
The law of total covariance therefore reduces the cross-period covariance to the covariance of these two conditional means. Substituting
\[
\Cov(p_{ik},p_{i\ell}\mid\bx_i)
=
\frac{\pi_{ik}\mathbf1(k=\ell)-\pi_{ik}\pi_{i\ell}}{\phi+1}
\]
gives \cref{eq:cross_covariance}.
\end{proof}

The marginal distribution in \cref{eq:marginal_distribution} is a logit-weighted reduced mixture-of-experts (LRMoE) model in the baseline attributes $\bx_i$
\citep{FungBadescuLin2019Theory}. This connection is useful because it describes the flexibility of the one-period model as well as how to choose the experts' distributions; the main interpretation of the paper remains credibility through latent risk classes.

The concentration parameter also controls persistence. If
$Z_{itk}=\mathbf1\{C_{it}=k\}$, then, for $s\neq t$,
\[
\Corr(Z_{isk},Z_{itk}\mid\bx_i)=\frac{1}{\phi+1}.
\]
Smaller values of $\phi$ therefore produce stronger within-policyholder similarity in class expression.

\subsubsection{Posterior risk classification}

Let
\[
\cH_i=\sigma(\bx_i,\be_{i,1:T_i},\bY_{i,1:T_i})
\]
denote the observed history through the policyholder's most recent period $T_i$. For an assignment
$\mathbf c=(c_1,\ldots,c_{T_i})\in\{1,\ldots,K\}^{T_i}$, let
$n_k(\mathbf c)=\sum_{t=1}^{T_i}\mathbf1(c_t=k)$ be the number of periods assigned to class $k$, and define the multivariate beta function by
\[
\mathcal B(\mathbf a)=\frac{\prod_k\Gamma(a_k)}{\Gamma(\sum_k a_k)}.
\]

\begin{proposition}[Posterior composition and updated risk-class weights]\label{prop:posterior_weights}
Let
$g_{itk}=f_k(\bY_{it}\mid\be_{it};\btheta_k)$.
Assume these observed scores are finite and nonnegative and that the denominator in \cref{eq:assignment_weight} is strictly positive. The posterior density of $\bp_i$ is the finite mixture of Dirichlet distributions
\begin{equation}\label{eq:posterior_p_mixture}
h_i(\mathbf u\mid\cH_i)
=
\sum_{\mathbf c}
q_i(\mathbf c)
\,\Dir\{\mathbf u;\phi\bpi_i+\mathbf n(\mathbf c)\},
\end{equation}
where
\begin{equation}\label{eq:assignment_weight}
q_i(\mathbf c)
=
\frac{
\dfrac{\mathcal B\{\phi\bpi_i+\mathbf n(\mathbf c)\}}{\mathcal B(\phi\bpi_i)}
\prod_{t=1}^{T_i}g_{it,c_t}
}{
\sum_{\mathbf c'}
\dfrac{\mathcal B\{\phi\bpi_i+\mathbf n(\mathbf c')\}}{\mathcal B(\phi\bpi_i)}
\prod_{t=1}^{T_i}g_{it,c'_t}
}.
\end{equation}
If
\[
\tau_{itk}
=
\Prb(C_{it}=k\mid\cH_i)
=
\sum_{\mathbf c}q_i(\mathbf c)\mathbf1(c_t=k),
\]
then the posterior mean composition is
\begin{equation}\label{eq:posterior_mean_p}
\widehat p_{ik}
:=
\E(p_{ik}\mid\cH_i)
=
\frac{\phi\pi_{ik}+\sum_{t=1}^{T_i}\tau_{itk}}{\phi+T_i}.
\end{equation}
Equivalently,
\begin{equation}\label{eq:weight_credibility}
\widehat p_{ik}
=
\frac{\phi}{\phi+T_i}\pi_{ik}
+
\frac{T_i}{\phi+T_i}\bar\tau_{ik},
\qquad
\bar\tau_{ik}=\frac1{T_i}\sum_{t=1}^{T_i}\tau_{itk}.
\end{equation}
\end{proposition}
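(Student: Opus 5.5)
The plan is to compute the joint density of $(\bp_i,\mathbf C_i,\bY_{i,1:T_i})$ given $\bx_i$ and the exposures, and then integrate out or sum over the variables we do not need. By the hierarchical model \cref{eq:dirichlet_model}--\cref{eq:expert_model}, the joint density at $(\mathbf u,\mathbf c)$ factorises as
\[
\frac{1}{\mathcal B(\phi\bpi_i)}\prod_{k}u_k^{\phi\pi_{ik}-1}
\;\prod_{t=1}^{T_i}u_{c_t}\,g_{it,c_t}.
\]
Here $\prod_t u_{c_t}=\prod_k u_k^{n_k(\mathbf c)}$ by collecting repeated indices. So for fixed $\mathbf c$ the $\mathbf u$-dependence is the kernel of $\Dir\{\phi\bpi_i+\mathbf n(\mathbf c)\}$. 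Multiplying and dividing by $\mathcal B\{\phi\bpi_i+\mathbf n(\mathbf c)\}$ rewrites the joint as
\[
\frac{\mathcal B\{\phi\bpi_i+\mathbf n(\mathbf c)\}}{\mathcal B(\phi\bpi_i)}\prod_t g_{it,c_t}\cdot\Dir\{\mathbf u;\phi\bpi_i+\mathbf n(\mathbf c)\}.
\]
This is the key step.

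Next, summing over the finitely many $\mathbf c\in\{1,\ldots,K\}^{T_i}$ and integrating $\mathbf u$ over the simplex gives the marginal likelihood of the history. This is exactly the denominator in \cref{eq:assignment_weight}, and it is strictly positive and finite by assumption, since the scores are finite and nonnegative. Dividing the joint by it, and summing only over $\mathbf c$, yields \cref{eq:posterior_p_mixture} with weights $q_i(\mathbf c)$. The same division also shows that $q_i(\mathbf c)=\Prb(\mathbf C_i=\mathbf c\mid\cH_i)$, because integrating the Dirichlet density over $\mathbf u$ gives one. Hence $\tau_{itk}=\sum_{\mathbf c}q_i(\mathbf c)\mathbf 1(c_t=k)$ is indeed the posterior class probability.

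For the posterior mean, I will use the tower property over $\mathbf c$ together with the Dirichlet mean from \cref{eq:p_moments}:
\[
\E(p_{ik}\mid\cH_i)=\sum_{\mathbf c}q_i(\mathbf c)\frac{\phi\pi_{ik}+n_k(\mathbf c)}{\phi+T_i}.
\]
The total concentration is $\phi+T_i$ for every $\mathbf c$, because $\sum_k\pi_{ik}=1$ and $\sum_k n_k(\mathbf c)=T_i$. This uniform denominator is what makes the result linear. Then I substitute $n_k(\mathbf c)=\sum_t\mathbf 1(c_t=k)$ and interchange the two finite sums, giving $\sum_{\mathbf c}q_i(\mathbf c)n_k(\mathbf c)=\sum_t\tau_{itk}$, which is \cref{eq:posterior_mean_p}. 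Finally, \cref{eq:weight_credibility} is an algebraic rewriting using $\sum_t\tau_{itk}=T_i\bar\tau_{ik}$.

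The only delicate point is the first step: the bookkeeping that converts $\prod_t u_{c_t}$ into the Dirichlet exponent $\mathbf n(\mathbf c)$, together with justifying the normalisation. Both are routine once the positivity assumption on the denominator is used.
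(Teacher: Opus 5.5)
Your proposal is correct and follows essentially the same route as the paper: Dirichlet--categorical conjugacy for each fixed class path $\mathbf c$, then integrating out $\bp_i$ to obtain the beta-function-ratio weights $q_i(\mathbf c)$, then averaging the component Dirichlet means $(\phi\pi_{ik}+n_k(\mathbf c))/(\phi+T_i)$ using the interchange $\sum_{\mathbf c}q_i(\mathbf c)n_k(\mathbf c)=\sum_t\tau_{itk}$. The only difference is that you write out explicitly the joint-density bookkeeping that the paper's proof leaves implicit, namely $\prod_t u_{c_t}=\prod_k u_k^{n_k(\mathbf c)}$ and the multiply-and-divide step by $\mathcal B\{\phi\bpi_i+\mathbf n(\mathbf c)\}$.
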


\begin{proof}
Conditional on $\mathbf C_i=\mathbf c$, Dirichlet--categorical conjugacy gives
\[
\bp_i\mid\mathbf C_i=\mathbf c,\cH_i
\sim
\Dir\{\phi\bpi_i+\mathbf n(\mathbf c)\}.
\]
Integrating the individual composition out of the joint density of the classes and outcomes gives the beta-function ratio in \cref{eq:assignment_weight}; normalisation then yields the assignment probabilities and the posterior mixture in \cref{eq:posterior_p_mixture}.

The mean of the Dirichlet component associated with $\mathbf c$ is
$(\phi\pi_{ik}+n_k(\mathbf c))/(\phi+T_i)$. Averaging over the assignment probabilities gives
\[
\E(p_{ik}\mid\cH_i)
=
\sum_{\mathbf c}q_i(\mathbf c)
\frac{\phi\pi_{ik}+n_k(\mathbf c)}{\phi+T_i}.
\]
Finally,
\[
\sum_{\mathbf c}q_i(\mathbf c)n_k(\mathbf c)
=
\sum_t\tau_{itk},
\]
which proves \cref{eq:posterior_mean_p}; rearranging gives \cref{eq:weight_credibility}.
\end{proof}

The vector
$\widehat\bp_i=(\widehat p_{i1},\ldots,\widehat p_{iK})^{\trans}$
is the insurer's updated risk classification for policyholder $i$. Equivalently, its entries are the posterior predictive probabilities of the classes expressed in a future period. It can be used directly for segmentation, monitoring, underwriting, and communication. A single label, such as
$\arg\max_k\widehat p_{ik}$, may be reported when an operational decision requires one class, but the full vector is more informative as a predictive class-probability summary. Similarly, the period-specific responsibilities $(\tau_{it1},\ldots,\tau_{itK})$ summarise the posterior class assignment for period $t$, whereas $\widehat\bp_i$ aggregates this information across the policyholder's history.

\Cref{eq:weight_credibility} makes the credibility structure explicit. The first term is the baseline portfolio classification obtained from observable characteristics, while the second summarises the policyholder's accumulated experience through posterior class responsibilities. The vector $\widehat\bp_i$ therefore provides a direct and interpretable bridge from traditional credibility weighting to multivariate risk classification.

\subsubsection{Posterior prediction and exact distributional credibility}

Consider period $T_i+1$ and suppose its expert predictors $\be_{i,T_i+1}$ are known. Define the portfolio-model and experience-updated cumulative distribution functions
\begin{align}
F_{i,T_i+1}^{\mathrm{pop}}(\mathbf y)
&=
\sum_{k=1}^{K}\pi_{ik}
F_k(\mathbf y\mid\be_{i,T_i+1};\btheta_k),
\label{eq:population_distribution}\\
F_{i,T_i+1}^{\mathrm{exp}}(\mathbf y)
&=
\sum_{k=1}^{K}\bar\tau_{ik}
F_k(\mathbf y\mid\be_{i,T_i+1};\btheta_k).
\label{eq:experience_distribution}
\end{align}
The first object is the predictive distribution obtained from the portfolio classification alone. The second is a posterior-responsibility-weighted fitted distribution that summarises the information in the policyholder's observed history; it is not the empirical distribution of the observed outcomes.

Let $\bmu_{i,T_i+1,k}$ and $\bSigma_{i,T_i+1,k}$ denote the mean and covariance of expert $k$ at $\be_{i,T_i+1}$.

\begin{proposition}[Exact model-based distributional credibility]\label{prop:predictive_credibility}
The posterior predictive cumulative distribution function is
\begin{equation}\label{eq:posterior_predictive}
F_{i,T_i+1}^{\mathrm{pred}}(\mathbf y\mid\cH_i)
=
\sum_{k=1}^{K}\widehat p_{ik}
F_k(\mathbf y\mid\be_{i,T_i+1};\btheta_k).
\end{equation}
It has the exact credibility representation
\begin{equation}\label{eq:distributional_credibility}
F_{i,T_i+1}^{\mathrm{pred}}(\mathbf y\mid\cH_i)
=
\frac{\phi}{\phi+T_i}F_{i,T_i+1}^{\mathrm{pop}}(\mathbf y)
+
\frac{T_i}{\phi+T_i}F_{i,T_i+1}^{\mathrm{exp}}(\mathbf y).
\end{equation}
The predictive mean is
\begin{equation}\label{eq:predictive_mean}
\E(\bY_{i,T_i+1}\mid\cH_i,\be_{i,T_i+1})
=
\sum_{k=1}^{K}\widehat p_{ik}\bmu_{i,T_i+1,k},
\end{equation}
and the predictive covariance for one future observation is
\begin{align}
\Var(\bY_{i,T_i+1}\mid\cH_i,\be_{i,T_i+1})
={}&
\sum_{k=1}^{K}\widehat p_{ik}
\left(
\bSigma_{i,T_i+1,k}
+\bmu_{i,T_i+1,k}\bmu_{i,T_i+1,k}^{\trans}
\right)
\notag\\
&-
\left(\sum_{k=1}^{K}\widehat p_{ik}\bmu_{i,T_i+1,k}\right)
\left(\sum_{k=1}^{K}\widehat p_{ik}\bmu_{i,T_i+1,k}\right)^{\trans}.
\label{eq:predictive_covariance}
\end{align}
\end{proposition}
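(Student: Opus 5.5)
The plan is to condition on the future class $C_{i,T_i+1}$ and on the latent composition $\bp_i$, and then to reduce everything to \Cref{prop:posterior_weights}. The structural fact doing the work is the conditional independence in the hierarchy \cref{eq:dirichlet_model}--\cref{eq:expert_model}. Given $\bp_i$, the future pair $(C_{i,T_i+1},\bY_{i,T_i+1})$ is independent of the past classes and outcomes. Moreover, given $C_{i,T_i+1}=k$, the future outcome has law $F_k(\cdot\mid\be_{i,T_i+1};\btheta_k)$, regardless of $\bp_i$ and of the history.

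I will write the posterior predictive probability as an iterated expectation:
\[
\Prb(\bY_{i,T_i+1}\le\mathbf y\mid\cH_i)
=\E\left[\Prb(\bY_{i,T_i+1}\le\mathbf y\mid\bp_i,\cH_i)\,\middle|\,\cH_i\right].
\]
I will then compute the inner conditional probability. Conditional independence gives $\Prb(\bY_{i,T_i+1}\le\mathbf y\mid\bp_i,\cH_i)=\sum_k p_{ik}F_k(\mathbf y\mid\be_{i,T_i+1};\btheta_k)$. This is the same calculation used in the proof of \Cref{prop:marginal_moments}, with the prior replaced by the posterior. The expression is linear in $\bp_i$, so the outer expectation only requires the posterior mean $\E(p_{ik}\mid\cH_i)=\widehat p_{ik}$, and this yields \cref{eq:posterior_predictive}.

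The main obstacle is justifying that the history affects the future outcome \emph{only} through $\bp_i$. This requires checking that $\be_{i,T_i+1}$ is known and nonrandom, and that $C_{i,T_i+1}$ is drawn independently of $C_{i,1:T_i}$ given $\bp_i$. I would make this explicit by factorising the joint density of $(\bp_i,\mathbf C_i,C_{i,T_i+1},\bY_{i,1:T_i+1})$ from the model equations. Integrating out $C_{i,T_i+1}$ then gives the claimed conditional law. An equivalent route is to note that $\Prb(C_{i,T_i+1}=k\mid\cH_i)=\widehat p_{ik}$, as the paper already remarks, so the predictive distribution is a mixture of the $F_k$ with these weights.

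The remaining parts are routine. To obtain \cref{eq:distributional_credibility}, I substitute the credibility form \cref{eq:weight_credibility} of $\widehat p_{ik}$ into \cref{eq:posterior_predictive} and split the sum. This produces $\frac{\phi}{\phi+T_i}\sum_k\pi_{ik}F_k+\frac{T_i}{\phi+T_i}\sum_k\bar\tau_{ik}F_k$, which is exactly the stated combination of \cref{eq:population_distribution} and \cref{eq:experience_distribution}. Finally, \cref{eq:posterior_predictive} exhibits the predictive law as a finite mixture with weights $\widehat p_{ik}$, which are nonnegative and sum to one. The mean \cref{eq:predictive_mean} and covariance \cref{eq:predictive_covariance} therefore follow from the standard finite-mixture formulas (the law of total expectation and the law of total variance over $C_{i,T_i+1}$), exactly as in \Cref{prop:marginal_moments}. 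These require only that the first two expert moments exist.
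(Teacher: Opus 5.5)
Your proposal is correct and follows essentially the same route as the paper. The paper conditions on $\bp_i$, uses that the next-period class is $\Cat(\bp_i)$ so the conditional CDF is linear in $\bp_i$, and passes to the posterior mean $\widehat p_{ik}$; it then substitutes \cref{eq:weight_credibility} and invokes the finite-mixture moment formulas. Your explicit factorisation of the joint density, showing that the history enters the future outcome only through $\bp_i$, simply spells out the conditional-independence step that the paper leaves implicit.
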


\begin{proof}
Conditional on $\bp_i$, the next-period class has probabilities $p_{i1},\ldots,p_{iK}$. Therefore,
\begin{align*}
F_{i,T_i+1}^{\mathrm{pred}}(\mathbf y\mid\cH_i)
&=
\E\left[
\sum_{k=1}^{K}p_{ik}
F_k(\mathbf y\mid\be_{i,T_i+1})
\middle|\cH_i
\right]\\
&=
\sum_{k=1}^{K}\E(p_{ik}\mid\cH_i)
F_k(\mathbf y\mid\be_{i,T_i+1}),
\end{align*}
which proves \cref{eq:posterior_predictive}. Substituting the credibility form of $\widehat p_{ik}$ from \cref{eq:weight_credibility} and collecting terms gives \cref{eq:distributional_credibility}. The mean and covariance follow from the usual finite-mixture moment formulas with weights $\widehat p_{ik}$.
\end{proof}

\Cref{eq:distributional_credibility} is the central actuarial quantity and an exact model-based distributional credibility formula. Classical credibility is usually presented for a mean or premium. Here the same exact weighted-average structure holds for the entire future cumulative distribution function. This provides more information than a credibility formula for the mean alone: once the predictive distribution has been updated, the actuary obtains the predictive mean, quantiles, event probabilities, tail summaries, and other pricing or risk-management functionals from one coherent object. Bayesian nonparametric credibility methods also pursue full predictive inference \citep{SiuYang2009,HongMartin2017}, but the present finite-mixture construction retains a particularly simple closed-form decomposition. With sufficiently flexible expert families and a suitable number of classes, the finite and interpretable risk-class representation can approximate a broad range of multivariate insurance distributions.

The portfolio-model component stabilises predictions when individual experience is limited, while the experience-updated component personalises the fitted distribution. Both components are model-assisted distributions: they average the same support-appropriate experts, but use the baseline class probabilities and the posterior experience weights, respectively. This distinction makes the resulting quantities smoother and more informative than a raw empirical distribution based on a short policyholder history.

\subsubsection{Credibility premiums}

For pricing based on aggregate loss, marginalising \cref{eq:distributional_credibility} to $L$ preserves the exact credibility structure:
\begin{equation}\label{eq:aggregate_loss_cdf_credibility}
F_{L,i,T_i+1}^{\mathrm{pred}}(\ell\mid\cH_i)
=
\frac{\phi}{\phi+T_i}F_{L,i,T_i+1}^{\mathrm{pop}}(\ell)
+
\frac{T_i}{\phi+T_i}F_{L,i,T_i+1}^{\mathrm{exp}}(\ell).
\end{equation}
Thus, pricing can be based on the aggregate-loss margin while the class weights have still been learned from the complete multivariate history.

Let
$\mu^L_{i,T_i+1,k}=\E(L_{i,T_i+1}\mid C_{i,T_i+1}=k,\be_{i,T_i+1})$. The corresponding portfolio-model and experience-updated premiums are
\begin{align}
\nu_{i,T_i+1}^{\mathrm{pop}}
&=
\sum_{k=1}^{K}\pi_{ik}\mu^L_{i,T_i+1,k},
\label{eq:population_premium}\\
\nu_{i,T_i+1}^{\mathrm{exp}}
&=
\sum_{k=1}^{K}\bar\tau_{ik}\mu^L_{i,T_i+1,k}.
\label{eq:experience_premium}
\end{align}
Applying the mean functional to \cref{eq:aggregate_loss_cdf_credibility} gives the exact model-based premium credibility formula
\begin{align}
\widehat\nu_{i,T_i+1}
&=
\E(L_{i,T_i+1}\mid\cH_i,\be_{i,T_i+1})
\notag\\
&=
\frac{\phi}{\phi+T_i}\nu_{i,T_i+1}^{\mathrm{pop}}
+
\frac{T_i}{\phi+T_i}\nu_{i,T_i+1}^{\mathrm{exp}}.
\label{eq:net_premium}
\end{align}
Thus, the net premium is exactly a model-based credibility-weighted average of a portfolio-model premium and an experience-updated premium. Unlike the raw means that often appear in elementary credibility formulas, both terms are model-assisted: exposure, the multivariate outcome structure, and the fitted class-specific loss distributions are incorporated before the two sources of information are combined. An expected-value premium with loading $\eta_{\mathrm{prem}}$ is $(1+\eta_{\mathrm{prem}})\widehat\nu_{i,T_i+1}$. Other premium principles can likewise be applied directly to the predictive distribution of aggregate loss.

\subsection{Theoretical flexibility and predictive learning}\label{subsec:theory}

This subsection establishes two theoretical properties of the DMMM that are directly relevant to credibility. The first concerns \emph{portfolio flexibility}: with sufficiently rich experts and enough latent risk classes, the one-period model can approximate a broad range of multivariate regression distributions. The second concerns \emph{individual learning}: once a collection of risk classes is fixed, an increasing policyholder history learns the combination of those classes that best represents that policyholder. \Cref{sec:dmm-simulation-study} illustrates both properties numerically.

Briefly, the theoretical properties follow from two standard mixture-model results. First, the one-period marginal distribution in \cref{eq:marginal_distribution} is an LRMoE, so existing denseness results apply directly. Second, conditional on a fixed dictionary of experts, learning the policyholder-specific composition is a finite-dimensional Bayesian mixture problem, for which posterior concentration around the Kullback--Leibler (KL) best approximation is well known. We now state the formal results.

\begin{theorem}[LRMoE; \citet{FungBadescuLin2019Theory}]\label{thm:lrmoe_dense}
Fix the period-level information $\be$ and let $\mathcal X$ be a compact
rectangular covariate space. Suppose that the family
$\{F^0(\cdot\mid\bx,\be):\bx\in\mathcal X\}$ is tight and that
$F^0(\by\mid\bx,\be)$ is Lipschitz continuous in $\bx$ for every
continuity point $\by$. If the expert families satisfy the denseness
conditions of \citet{FungBadescuLin2019Theory}, then there exists a
sequence of finite LRMoE models of the form in \cref{eq:marginal_distribution} such that
\[
\sup_{\bx\in\mathcal X}
\left|
F_K(\by\mid\bx,\be)-F^0(\by\mid\bx,\be)
\right|
\longrightarrow 0,
\qquad K\to\infty,
\]
for every continuity point $\by$ of the target distribution.
\end{theorem}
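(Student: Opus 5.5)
The theorem is essentially a transcription of the denseness result of \citet{FungBadescuLin2019Theory}, so the plan is to reduce the one-period DMMM marginal to their setting and then follow their two-stage approximation. First, by \cref{prop:marginal_moments}, the one-period marginal distribution of the DMMM is exactly the LRMoE in \cref{eq:marginal_distribution}: softmax gating weights $\pi_{ik}$ from \cref{eq:softmax} and experts $F_k(\cdot\mid\be;\btheta_k)$. The concentration $\phi$ drops out, so the approximation question involves only $(\mathbf B,\btheta_1,\ldots,\btheta_K)$ at the fixed $\be$. With this reduction the claim is literally their theorem. For completeness, I would reconstruct their argument as follows.

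\textbf{Step 1: partition of unity.} Partition the compact rectangle $\mathcal X$ into small congruent subrectangles $R_1,\ldots,R_M$ of mesh $\delta$. Construct softmax gates $\pi_m(\bx)$ that approximate the indicators $\mathbf 1\{\bx\in R_m\}$ uniformly outside a boundary layer of arbitrarily small width. This works because $\exp(\lambda\bx^\trans\boldsymbol\beta_m)$ with a large scale $\lambda$ and suitably chosen linear and quadratic-free index functions can sharpen; Fung et al.\ do this through a grid of logits. On each cell, Lipschitz continuity of $F^0$ in $\bx$ gives $|F^0(\by\mid\bx,\be)-F^0(\by\mid\bx_m,\be)|\le L\delta$ for a representative point $\bx_m$.

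\textbf{Step 2: covariate-free experts.} For each cell, approximate the fixed distribution $F^0(\cdot\mid\bx_m,\be)$ by a finite mixture of experts from the dense class. Tightness lets this be done uniformly over $m$: all targets live, up to $\varepsilon$ mass, on a common compact set. The approximation is at every continuity point $\by$, which is exactly the denseness condition assumed on the expert families. The combined LRMoE then uses $K=\sum_m K_m$ classes, with gates equal to the cell gate times the within-cell mixture weight. This product is again of softmax form after adding constants to the logits, since constants are absorbed into intercepts in $\bx$.

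\textbf{Step 3: error bound.} Decompose the error $\sup_{\bx}|F_K-F^0|$ into three pieces: a gating-leakage term, the Lipschitz term $L\delta$, and the expert-approximation term. Sending $\delta\to0$ first and then sharpening the gates and refining the mixtures yields the convergence as $K\to\infty$.

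The main obstacle is Step 1 near cell boundaries. Softmax gates cannot exactly reproduce indicators, so the sup over $\bx$ must be handled by noting two facts. On a thin boundary layer, the gate mixes only neighbouring cells, whose targets differ by $O(L\delta)$ by Lipschitz continuity. Since the error is a convex combination of per-cell errors, it stays small uniformly. I would try this convexity argument first rather than attempting exact indicator approximation.
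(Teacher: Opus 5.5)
Your proposal is correct and follows the paper's route. The paper gives no separate proof: it observes, right after \cref{prop:marginal_moments}, that at fixed $\be$ the one-period marginal is a softmax-gated mixture of covariate-free experts (with $\phi$ playing no role), i.e.\ an LRMoE, so the denseness theorem of \citet{FungBadescuLin2019Theory} applies verbatim. Your optional reconstruction (grid cells, sharpened affine gates merged with within-cell mixture weights by adding $\log$-weights to the logits, and a convexity bound giving distant cells uniformly negligible gate weight while nearby cells have targets within $O(L\delta)$) is a reasonable outline of the cited result but is not needed here.
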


Thus, although the multinomial-logit gate uses affine predictors in the baseline covariates, a sufficiently rich collection of latent classes can approximate nonlinear regression effects, multimodality, and complex dependence structures. The result concerns approximation capacity; increasing $K$ alone is insufficient unless the chosen expert families satisfy the required denseness conditions. \citet{FungBadescuLin2019Theory} discusses several expert families with the required approximation properties, including Gamma experts for positive continuous outcomes and Gamma-count experts for count outcomes.

For predictive learning, fix a policyholder and suppress $\bx$ and $\be$. For a $K$-expert dictionary, define
\[
f_{\bp,K}=\sum_{k=1}^K p_k f_{k,K},
\qquad
\delta_K=\inf_{\bp\in\simplex}\KL(f^0,f_{\bp,K}),
\]
where $f^0$ is the policyholder's true distribution. We call the dictionaries \emph{KL-rich} if
\begin{equation}\label{eq:KL_rich}
\delta_K\longrightarrow0\qquad(K\to\infty).
\end{equation}
This is stronger than weak denseness, but can be ensured by suitable support-consistent expert families under standard regularity conditions. This is a condition on how the modeller enlarges the dictionary, rather than an assumption that the true portfolio literally consists of an increasing number of discrete risk types.

\begin{theorem}[Posterior predictive learning]\label{thm:predictive_truth}
Suppose $\bY_1,\bY_2,\ldots$ are i.i.d.\ from $f^0$, conditional on the global model parameters. For each fixed $K$, assume that
$R_K(\bp)=\KL(f^0,f_{\bp,K})$ is finite and continuous on the simplex and has a unique minimiser $\bp_K^\star$, that the usual uniform law of large numbers holds for the log likelihood, and that the Dirichlet prior has strictly positive parameters. Then
\[
\KL\!\left(f^0,f_{T,K}^{\mathrm{pred}}\right)
\xrightarrow[T\to\infty]{\Prb_0}
\delta_K,
\qquad
f_{T,K}^{\mathrm{pred}}
=
\sum_{k=1}^K
\E(p_k\mid\bY_{1:T})f_{k,K}.
\]
Consequently, if \cref{eq:KL_rich} holds,
\begin{equation}\label{eq:predictive_truth_limit}
\lim_{K\to\infty}\lim_{T\to\infty}
\KL\!\left(f^0,f_{T,K}^{\mathrm{pred}}\right)=0
\qquad\text{in }\Prb_0\text{-probability}.
\end{equation}
\end{theorem}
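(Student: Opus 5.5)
The plan is to reduce the statement to posterior consistency of the composition $\bp$ in the misspecified finite-dimensional model $\{f_{\bp,K}:\bp\in\simplex\}$, and then transfer this to the predictive mixture through continuity of $R_K$.

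By \cref{prop:predictive_credibility}, $f_{T,K}^{\mathrm{pred}}=f_{\widehat\bp_T,K}$ with $\widehat\bp_T=\E(\bp\mid\bY_{1:T})\in\simplex$, so
\[
\KL(f^0,f_{T,K}^{\mathrm{pred}})=R_K(\widehat\bp_T).
\]
Since $R_K$ is continuous on the compact simplex and attains $\delta_K=R_K(\bp_K^\star)$, it suffices to show $\widehat\bp_T\to\bp_K^\star$ in $\Prb_0$-probability. Note that integrating out the period classes makes $\bY_1,\ldots,\bY_T\mid\bp$ i.i.d.\ with density $f_{\bp,K}$. The posterior of $\bp$ is therefore the ordinary Bayesian posterior
\[
\Pi(\bp\mid\bY_{1:T})\propto \Dir(\bp;\phi\bpi)\prod_{t}f_{\bp,K}(\bY_t),
\]
matching the mixture-of-Dirichlets form in \cref{prop:posterior_weights}.

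The key step, and the main obstacle, is posterior concentration at the KL minimiser under misspecification. I would run the standard Schwartz-type argument on the log-likelihood ratio
\[
\ell_T(\bp)=\frac1T\sum_t\log\frac{f_{\bp,K}(\bY_t)}{f^0(\bY_t)}.
\]
The assumed uniform law of large numbers gives $\sup_{\bp}|\ell_T(\bp)+R_K(\bp)|\to0$ in probability. Fix an open neighbourhood $U$ of $\bp_K^\star$. By compactness, continuity, and uniqueness of the minimiser, $\inf_{\simplex\setminus U}R_K\ge\delta_K+2\eta$ for some $\eta>0$. Continuity also gives a neighbourhood $V\subset U$ with $R_K\le\delta_K+\eta/2$ on $V$, and $V$ has positive Dirichlet mass because all prior parameters are strictly positive. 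The posterior mass outside $U$ is then bounded by
\[
\frac{\Pi(\simplex\setminus U)\,e^{T(-\delta_K-2\eta+o_p(1))}}{\Pi(V)\,e^{T(-\delta_K-\eta/2-o_p(1))}}\longrightarrow0 .
\]
Here $R_K$ is finite everywhere, which keeps every quantity finite and makes $\delta_K$ cancel. The points to check carefully are that the uniform law of large numbers really applies on the closed simplex, including its boundary, and that the $f^0$ normalisation is harmless.

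Because $\bp$ lives in a bounded set, concentration of the posterior implies convergence of the posterior mean. Indeed, $|\widehat\bp_T-\bp_K^\star|\le \operatorname{diam}(U)+2\,\Pi(\simplex\setminus U\mid\bY_{1:T})$, so $\widehat\bp_T\to\bp_K^\star$ in probability. The continuous mapping theorem then gives $R_K(\widehat\bp_T)\to\delta_K$ in probability, which is the first claim.

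For \cref{eq:predictive_truth_limit}, the inner limit in probability is the deterministic constant $\delta_K$, so the iterated limit equals $\lim_K\delta_K$, which is $0$ by \cref{eq:KL_rich}.
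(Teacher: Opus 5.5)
Your proposal is correct and follows the same route as the paper. Both arguments use misspecified posterior concentration at $\bp_K^\star$, then convergence of the bounded posterior mean, then continuity of $R_K$, and finally $\delta_K\to0$. The only difference is that you derive the concentration step yourself, via a Schwartz-type ratio bound built on the assumed uniform law of large numbers, whereas the paper simply cites the standard misspecified Bayesian consistency results of Berk and of Kleijn and van der Vaart.
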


\begin{proof}
For fixed $K$, standard misspecified Bayesian consistency
\citep{Berk1966,KleijnVanDerVaart2006} gives posterior concentration at
$\bp_K^\star$. Hence the posterior mean composition converges to
$\bp_K^\star$, and continuity of $R_K$ gives the first limit. The second follows from $\delta_K\to0$.
\end{proof}

\begin{corollary}[Premium learning]\label{cor:moment_learning}
If the loss coordinate additionally has $(1+\eta)$ moments that are uniformly bounded, for some $\eta>0$, under $f^0$ and across the expert dictionaries, then the corresponding posterior predictive expected-loss premium converges to the true expected loss under the same sequential limit.
\end{corollary}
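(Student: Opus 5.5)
The plan is to pass from KL convergence to convergence in total variation (Pinsker), and then from total variation to convergence of the loss mean using the uniform $(1+\eta)$-moment bound. This is a truncation and uniform-integrability argument.

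Let $f^{\mathrm{pred}}_{T,K}$ be as in \cref{thm:predictive_truth}. Write $\nu_{T,K}=\int \ell\, f^{\mathrm{pred}}_{T,K}$ for the loss-margin mean and $\nu^0=\int \ell\, f^0$. Marginalising to $L$ does not increase total variation. Pinsker's inequality gives
\[
\|f^0_L-f^{\mathrm{pred}}_{L,T,K}\|_{TV}\le \sqrt{\tfrac12\KL(f^0,f^{\mathrm{pred}}_{T,K})}.
\]
By \cref{eq:predictive_truth_limit}, the iterated limit of the right-hand side is $0$ in $\Prb_0$-probability.

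Next, split the mean difference at a truncation level $M>0$:
\[
|\nu_{T,K}-\nu^0|\le M\,\|f^0_L-f^{\mathrm{pred}}_{L,T,K}\|_{TV}\cdot 2
+\int_{|\ell|>M}|\ell| f^0_L
+\int_{|\ell|>M}|\ell| f^{\mathrm{pred}}_{L,T,K}.
\]
The tail terms are controlled by the moment assumption. Suppose every expert loss margin $f_{L,k,K}$ and $f^0_L$ satisfy $\int|\ell|^{1+\eta}\le C$ uniformly in $k$ and $K$. Then $f^{\mathrm{pred}}_{T,K}=\sum_k \E(p_k\mid\bY_{1:T})f_{k,K}$ is a convex combination, so it satisfies the same bound for every $T$ and every realisation of the history. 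By Markov's inequality each tail integral is at most $C M^{-\eta}$. Hence
\[
|\nu_{T,K}-\nu^0|\le 2M\|f^0_L-f^{\mathrm{pred}}_{L,T,K}\|_{TV}+2CM^{-\eta}.
\]

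For given $\varepsilon$, first choose $M$ so that $2CM^{-\eta}<\varepsilon/2$. Then apply the iterated limit in probability of the TV term. This gives $\lim_K\lim_T \Prb_0(|\nu_{T,K}-\nu^0|>\varepsilon)=0$. Here $\lim_T$ is understood as $\limsup_T$, as in \cref{thm:predictive_truth}, and the first limit there gives the inner statement with $\delta_K$ in place of $0$. The premium in \cref{eq:net_premium} is exactly $\nu_{T,K}$, which concludes the argument.

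The main obstacle is the interpretation of the moment hypothesis. It must be read as a uniform bound on the expert loss margins, not merely on $f^0$. That reading is what makes the bound hold deterministically for the data-dependent convex combination. I would state this reading explicitly rather than try to derive uniform integrability of the posterior predictive from posterior concentration. For fixed $K$, there is an alternative: posterior-mean convergence of $\bp$ plus finiteness of the expert means would already give $\nu_{T,K}\to\sum_k p^\star_{K,k}\mu^L_k$. However, the $K\to\infty$ step still requires the uniform tail control above.
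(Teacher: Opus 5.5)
Your proof is correct and follows the argument the paper intends. The paper states the corollary without a written proof, but its uniform $(1+\eta)$-moment hypothesis is exactly what is needed to pass from the KL convergence of \cref{thm:predictive_truth} (via Pinsker's inequality and marginalisation to $L$) to convergence of the loss mean by uniform integrability, which your truncation at level $M$ makes explicit. Reading the hypothesis as a uniform bound on the expert loss margins, so that it transfers deterministically to the data-dependent convex combination, and fixing $M$ before taking the iterated limit are the right steps; the only cosmetic point is that the tail bound comes from $|\ell|\mathbf 1\{|\ell|>M\}\le|\ell|^{1+\eta}M^{-\eta}$, not from Markov's inequality as such.
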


The interpretation is straightforward. For a fixed set of risk classes, a longer policyholder history learns the best combination of those classes. Increasing $K$ makes the collection of available risk classes richer, and under the KL-richness condition the remaining approximation error can be made arbitrarily small. The result is therefore a controlled finite-mixture analogue of standard Bayesian mixture consistency. It describes the capacity of the model, not a recommendation to choose a very large $K$ in finite samples; in practice, the number of classes should still be controlled through actuarial judgment and out-of-sample validation.

\section{Estimation and inference}\label{sec:estimation}

This section develops an expectation--maximization (EM)-type estimation procedure for the DMMM. Standard finite-mixture models are commonly fitted by EM; here we additionally allow fixed response-component weights so that auxiliary signals can inform class formation without dominating it simply because they are numerous. The resulting procedure maximises a weighted criterion while preserving the Dirichlet updating structure. We first define the weighted criterion and the corresponding E- and M-step quantities, and then discuss the selection of the number of classes and response weights, as well as bootstrap inference.

A Julia implementation of the DMMM model and the estimation procedure developed in this section is available at \url{https://github.com/ianwengchan/DMMM.jl}.

\subsection{Weighted EM-type estimation}\label{subsec:em}

Suppose that, conditional on class $k$, the expert admits the factorisation
\begin{equation}\label{eq:weighted_expert_factorization}
f_k(\bY_{it}\mid\be_{it};\btheta_k)
=
\prod_{r=1}^{R}
f_{rk}\!\left(
Y_{it}^{(r)}
\mid \bY_{it}^{(\mathcal P_r)},\be_{it};
\btheta_{rk}
\right),
\end{equation}
where $\mathcal P_r$ indexes any response components on which factor $r$ is conditionally modelled; it may be empty. This notation includes, for example, the factorisation
$f_{N,k}(n\mid\be)f_{L,k}(\ell\mid n,\be)\prod_j f_{S_j,k}(s_j\mid\be)$
used in \cref{eq:expert_factorization}. For notational simplicity, the additional response-conditioning arguments are suppressed below.

Let $\omega_r>0$ be the fixed importance assigned to response component $r$, and define the weighted class score
\begin{equation}\label{eq:weighted_class_score}
\widetilde g_{itk}
=
\prod_{r=1}^{R}
\left[
f_{rk}\!\left(Y_{it}^{(r)}\mid\be_{it};\btheta_{rk}\right)
\right]^{\omega_r},
\qquad
\log\widetilde g_{itk}
=
\sum_{r=1}^{R}\omega_r
\log f_{rk}\!\left(Y_{it}^{(r)}\mid\be_{it};\btheta_{rk}\right).
\end{equation}

The purpose of \cref{eq:weighted_class_score} is not to claim that one variable is intrinsically more informative than another. Rather, it allows the estimated classes to reflect the actuarial target of the analysis. For example, suppose the response consists of claim count $N_{it}$, aggregate loss $L_{it}$, and $d_S$ auxiliary signals $S_{it1},\ldots,S_{itd_S}$. If the classes are intended primarily to represent insurance risk, one possible specification is
\begin{equation}\label{eq:loss_oriented_weights}
\omega_N=\omega_L=1,
\qquad
\omega_{S_j}=\frac{\omega_{\mathrm{sig}}}{d_S},
\quad j=1,\ldots,d_S,
\qquad
0<\omega_{\mathrm{sig}}\leq1.
\end{equation}
The two insurance outcomes therefore retain unit weight, whereas the \emph{entire} signal block contributes total weight $\omega_{\mathrm{sig}}$. Setting $\omega_{\mathrm{sig}}=1$ gives the signal block the same total weight as each insurance outcome, regardless of the number of signals. This provides a simple default when the signals are intended to provide collateral information without dominating class formation. More generally, $\omega_{\mathrm{sig}}$ can be treated as a tuning parameter and selected by policyholder-level validation for the actuarial prediction target. The ordinary unweighted likelihood is recovered only when every individual component weight $\omega_r$ equals one.

\subsubsection{Complete-data weighted log-likelihood}\label{subsec:likelihoods}

Introduce the indicators $Z_{itk}=\mathbf1\{C_{it}=k\}$. Following the usual mixture-model argument, the corresponding complete-data weighted criterion is
\begin{align}
\widetilde\ell_{c,i}(\Theta;\boldsymbol\omega)
={}&
\log\Gamma(\phi)
-
\sum_{k=1}^{K}\log\Gamma(\phi\pi_{ik})
+
\sum_{k=1}^{K}(\phi\pi_{ik}-1)\log p_{ik}
\notag\\
&+
\sum_{t=1}^{T_i}\sum_{k=1}^{K}
Z_{itk}
\left\{
\log p_{ik}
+
\sum_{r=1}^{R}\omega_r
\log f_{rk}\!\left(Y_{it}^{(r)}\mid\be_{it};\btheta_{rk}\right)
\right\}.
\label{eq:complete_likelihood}
\end{align}
The Dirichlet distribution and categorical class-assignment terms are unchanged from the ordinary complete-data log-likelihood; only the response contribution is reweighted. This distinction preserves the interpretation of $\phi\bpi_i$ as prior pseudo-counts while allowing the observed outcomes to have different degrees of influence on the inferred classes.

\subsubsection{Weighted E-step}\label{subsec:estep}

At iteration $m$, let $\Theta^{(m)}$ denote the current parameter values. For a complete class assignment $\mathbf c$, define
\begin{equation}\label{eq:EM_assignment_weight}
\gamma_i^{(m)}(\mathbf c)
=
\frac{
\dfrac{\mathcal B\{\phi^{(m)}\bpi_i^{(m)}+\mathbf n(\mathbf c)\}}
{\mathcal B(\phi^{(m)}\bpi_i^{(m)})}
\prod_{t=1}^{T_i}\widetilde g_{it,c_t}^{(m)}
}{
\sum_{\mathbf c'}
\left(
\dfrac{\mathcal B\{\phi^{(m)}\bpi_i^{(m)}+\mathbf n(\mathbf c')\}}
{\mathcal B(\phi^{(m)}\bpi_i^{(m)})}
\prod_{t=1}^{T_i}\widetilde g_{it,c'_t}^{(m)}
\right)
}.
\end{equation}
For the ordinary model, these are the posterior probabilities of the complete class sequence in \cref{eq:assignment_weight}. Under non-unit response weights, they are better interpreted as \emph{generalized-posterior} assignment probabilities obtained by normalising the powered kernel over the latent variables. In either case, they sum to one over all class sequences and play the same computational role in the E-step.

\begin{proposition}[Weighted E-step expectations]\label{prop:EM_estep}
The quantities required by the weighted EM objective are
\begin{equation}\label{eq:tau_EM}
\widetilde\tau_{itk}^{(m)}
:=
\sum_{\mathbf c}
\gamma_i^{(m)}(\mathbf c)\mathbf1(c_t=k),
\end{equation}
and
\begin{equation}\label{eq:xi_EM}
\widetilde\xi_{ik}^{(m)}
:=
\sum_{\mathbf c}\gamma_i^{(m)}(\mathbf c)
\left[
\psi\{\phi^{(m)}\pi_{ik}^{(m)}+n_k(\mathbf c)\}
-
\psi\{\phi^{(m)}+T_i\}
\right],
\end{equation}
where $\psi$ is the digamma function. Up to terms that do not involve the candidate parameters, the weighted $Q$-function is
\begin{align}
\widetilde Q(\Theta\mid\Theta^{(m)};\boldsymbol\omega)
={}&
\sum_{i=1}^{n}
\left[
\log\Gamma(\phi)
-
\sum_{k=1}^{K}\log\Gamma(\phi\pi_{ik})
+
\sum_{k=1}^{K}
(\phi\pi_{ik}-1)\widetilde\xi_{ik}^{(m)}
\right]
\notag\\
&+
\sum_{i=1}^{n}\sum_{t=1}^{T_i}\sum_{k=1}^{K}
\widetilde\tau_{itk}^{(m)}
\sum_{r=1}^{R}\omega_r
\log f_{rk}\!\left(Y_{it}^{(r)}\mid\be_{it};\btheta_{rk}\right).
\label{eq:Q_EM}
\end{align}
\end{proposition}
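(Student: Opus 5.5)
The approach is to take the conditional expectation of the complete-data weighted criterion \cref{eq:complete_likelihood} under the generalized posterior of the latent pair $(\bp_i,\mathbf C_i)$ at $\Theta^{(m)}$. That generalized posterior is the powered-kernel analogue of \cref{prop:posterior_weights}. First I would fix it concretely. Its joint kernel is proportional to
\[
\Dir(\bp_i;\phi^{(m)}\bpi_i^{(m)})\prod_t p_{i c_t}\widetilde g^{(m)}_{it,c_t}.
\]
Integrating out $\bp_i$ exactly as in the proof of \cref{prop:posterior_weights} gives the marginal over class sequences, namely $\gamma_i^{(m)}(\mathbf c)$ in \cref{eq:EM_assignment_weight}. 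It also gives the conditional $\bp_i\mid\mathbf C_i=\mathbf c\sim\Dir\{\phi^{(m)}\bpi_i^{(m)}+\mathbf n(\mathbf c)\}$. This holds because the weights $\omega_r$ only touch the $\widetilde g$ factors, which do not involve $\bp_i$, so Dirichlet--categorical conjugacy is unaffected.

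Next I would take the expectation term by term using the tower property: first conditional on $\mathbf c$, then averaged over $\gamma_i^{(m)}$.

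The $\log\Gamma$ terms are deterministic and stay as they are.

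For the term $\sum_k(\phi\pi_{ik}-1)\log p_{ik}$, the key identity is $\E[\log u_k]=\psi(a_k)-\psi(\sum_\ell a_\ell)$ under $\Dir(\mathbf a)$, with $\sum_\ell a_\ell=\phi^{(m)}+T_i$. Averaging over $\mathbf c$ gives $\widetilde\xi^{(m)}_{ik}$ as in \cref{eq:xi_EM}.

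For the response term, $\E[Z_{itk}]=\sum_{\mathbf c}\gamma_i^{(m)}(\mathbf c)\mathbf 1(c_t=k)=\widetilde\tau^{(m)}_{itk}$, which produces the last line of \cref{eq:Q_EM}.

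The remaining piece is $\sum_{t,k}Z_{itk}\log p_{ik}=\sum_k n_k(\mathbf C_i)\log p_{ik}$. Its expectation is $\sum_k\sum_{\mathbf c}\gamma_i^{(m)}(\mathbf c)\,n_k(\mathbf c)\,[\psi(\phi^{(m)}\pi^{(m)}_{ik}+n_k(\mathbf c))-\psi(\phi^{(m)}+T_i)]$. This quantity depends only on the current iterate $\Theta^{(m)}$, not on the candidate $\Theta$, so it is absorbed into the constant. The phrase ``up to terms that do not involve the candidate parameters'' accounts for it.

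Finally I would sum over $i$, using independence across policyholders, to obtain \cref{eq:Q_EM}.

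The main obstacle, which is conceptual rather than computational, is justifying the generalized-posterior step. With $\omega_r\neq1$ the powered kernel is not a true likelihood, so I would define the E-step expectation explicitly as expectation under the normalised powered kernel. I would then check that it factorises as $\gamma_i^{(m)}(\mathbf c)$ times the updated Dirichlet, which needs the assumed positivity and finiteness of the normaliser. A secondary point to verify is that the $\log p_{ik}$ contribution from the class indicators truly does not depend on $\Theta$: $p_{ik}$ is latent, and the candidate parameters enter only through $\phi\pi_{ik}$ and $\btheta_{rk}$, so this holds.
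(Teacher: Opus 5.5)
Your proposal is correct and follows essentially the same route as the paper. Conditional on $\mathbf C_i=\mathbf c$, the generalized posterior of $\bp_i$ is $\Dir\{\phi^{(m)}\bpi_i^{(m)}+\mathbf n(\mathbf c)\}$; the Dirichlet log-moment and the averaging of $Z_{itk}$ over $\gamma_i^{(m)}$ then give $\widetilde\xi^{(m)}_{ik}$ and $\widetilde\tau^{(m)}_{itk}$, and $\sum_{t,k}Z_{itk}\log p_{ik}$ is discarded as a function of $\Theta^{(m)}$ only, which is exactly the paper's argument made more explicit (you write out the joint powered kernel, integrate out $\bp_i$ to recover $\gamma_i^{(m)}$, and compute the discarded constant).
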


\begin{proof}
Conditional on $\mathbf C_i=\mathbf c$, the response weights do not involve $\bp_i$. Hence the Dirichlet--categorical conjugacy is unchanged and
\[
\mathcal L_{\boldsymbol\omega}
(\bp_i\mid\mathbf C_i=\mathbf c,\bY_i,\Theta^{(m)})
=
\Dir\{\phi^{(m)}\bpi_i^{(m)}+\mathbf n(\mathbf c)\},
\]
where $\mathcal L_{\boldsymbol\omega}$ denotes the normalised generalized-posterior law. The usual Dirichlet log-moment therefore gives
\[
\E_{\boldsymbol\omega}
(\log p_{ik}\mid\mathbf C_i=\mathbf c,\bY_i,\Theta^{(m)})
=
\psi\{\phi^{(m)}\pi_{ik}^{(m)}+n_k(\mathbf c)\}
-
\psi\{\phi^{(m)}+T_i\}.
\]
Averaging the log-moment and the indicators $Z_{itk}$ over \cref{eq:EM_assignment_weight} gives \cref{eq:tau_EM,eq:xi_EM}. Taking the same weighted conditional expectation of \cref{eq:complete_likelihood} yields \cref{eq:Q_EM}. As in the ordinary EM derivation, the categorical term $\sum_{t,k}Z_{itk}\log p_{ik}$ does not contain candidate values of $\mathbf B$, $\phi$, or the expert parameters after the expectation has been taken, and can be omitted from the maximisation criterion.
\end{proof}

Direct evaluation of these quantities from \cref{eq:EM_assignment_weight} sums over $K^{T_i}$ class sequences and can therefore be computationally expensive. Computational details are provided in the Appendix. \Cref{subsec:computing_estep} develops two exact alternatives based on grouping and count-state dynamic programming, together with a collapsed Monte Carlo approximation for larger problems.

\subsubsection{Weighted M-step}\label{subsec:mstep}

\begin{proposition}[Separation of the weighted M-step]\label{prop:EM_mstep}
Assume the expert parameter blocks are variation-independent and disjoint across factors and classes. When a finite conditional maximiser exists, the parameters of factor $r$ in class $k$ are updated through
\begin{equation}\label{eq:expert_update}
\btheta_{rk}^{(m+1)}
=
\arg\max_{\btheta_{rk}}
\sum_{i=1}^{n}\sum_{t=1}^{T_i}
\widetilde\tau_{itk}^{(m)}\omega_r
\log f_{rk}\!\left(Y_{it}^{(r)}\mid\be_{it};\btheta_{rk}\right).
\end{equation}
The gating coefficients and concentration parameter maximise
\begin{align}
\widetilde Q_D(\mathbf B,\phi)
=
\sum_{i=1}^{n}
\Bigg[
&\log\Gamma(\phi)
-
\sum_{k=1}^{K}\log\Gamma\{\phi\pi_{ik}(\mathbf B)\}
+
\sum_{k=1}^{K}
\{\phi\pi_{ik}(\mathbf B)-1\}\widetilde\xi_{ik}^{(m)}
\Bigg].
\label{eq:QD_EM}
\end{align}
\end{proposition}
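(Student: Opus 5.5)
The argument works directly from the weighted $Q$-function in \cref{prop:EM_estep}. At iteration $m$ the quantities $\widetilde\tau_{itk}^{(m)}$ and $\widetilde\xi_{ik}^{(m)}$ are fixed numbers computed from $\Theta^{(m)}$; they do not vary with the candidate $\Theta$. So \cref{eq:Q_EM} is an explicit function of the candidate parameters, and I would sort its terms by the parameter blocks on which they depend. The first bracket in \cref{eq:Q_EM} involves only $(\mathbf B,\phi)$, entering through $\phi$ and $\pi_{ik}(\mathbf B)$ from \cref{eq:softmax}. This bracket is exactly $\widetilde Q_D(\mathbf B,\phi)$ in \cref{eq:QD_EM}. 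The second part is a triple sum weighted by $\widetilde\tau_{itk}^{(m)}\omega_r$. Each summand $\log f_{rk}(Y_{it}^{(r)}\mid\be_{it};\btheta_{rk})$ depends only on $\btheta_{rk}$. Under the weighted factorisation \cref{eq:weighted_expert_factorization}, the conditioning on $\bY_{it}^{(\mathcal P_r)}$ enters only through observed data, not through other parameter blocks.

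The key step is to use variation-independence and disjointness of the blocks $\{(\mathbf B,\phi)\}\cup\{\btheta_{rk}\}_{r,k}$. These assumptions make the parameter space a Cartesian product, and they make $\widetilde Q$ additively separable:
\[
\widetilde Q(\Theta\mid\Theta^{(m)};\boldsymbol\omega)=\widetilde Q_D(\mathbf B,\phi)+\sum_{k=1}^K\sum_{r=1}^R Q_{rk}(\btheta_{rk}),
\]
where $Q_{rk}$ is the objective in \cref{eq:expert_update}. A separable function on a product set is maximised jointly exactly when each summand is maximised over its own factor. The sup of the sum equals the sum of the sups because the blocks can be chosen freely and independently. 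Hence, whenever finite conditional maximisers exist, the joint M-step update is assembled from \cref{eq:expert_update} for every pair $(r,k)$ together with the maximiser of \cref{eq:QD_EM}.

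The only point needing care is an omitted-term check. I need to confirm that nothing dropped in \cref{prop:EM_estep} couples the blocks. The categorical term $\sum_{t,k}\widetilde\tau_{itk}^{(m)}\widetilde\xi$-type contribution from $\E(Z_{itk}\log p_{ik})$ is evaluated under the current-iteration law, so it contains no candidate parameters. I would also note two further points. First, the constant factor $\omega_r>0$ does not change the argmax in \cref{eq:expert_update}; it is kept only to match the form of $\widetilde Q$. Second, $(\mathbf B,\phi)$ cannot be split further in general, because $\phi\pi_{ik}(\mathbf B)$ couples the two inside the log-gamma terms. This is why they are maximised jointly in \cref{eq:QD_EM}.
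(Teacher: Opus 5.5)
Your proposal is correct and follows the paper's proof: the expert parameters enter only through the $\widetilde\tau$-weighted response sum in \cref{eq:Q_EM}, variation-independent disjoint blocks make that sum, and hence its maximisation, split by factor and class, and the remaining candidate-dependent terms are exactly \cref{eq:QD_EM}; your product-set/sum-of-suprema argument just spells out what the paper states in one line. One small imprecision that does not affect the argument: the dropped term $\E_{\boldsymbol\omega}(Z_{itk}\log p_{ik}\mid\bY_i,\Theta^{(m)})$ is not literally $\widetilde\tau_{itk}^{(m)}\widetilde\xi_{ik}^{(m)}$, since class labels and composition are dependent under the posterior, but it contains no candidate parameters either way.
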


\begin{proof}
The expert parameters occur only in the last line of \cref{eq:Q_EM}. Under the stated disjoint-block assumption, parameters belonging to different factors and classes separate, yielding \cref{eq:expert_update}. Factorisation alone would not suffice if factors shared parameters or constraints. The remaining candidate-dependent terms are exactly the expected log-density of the Dirichlet layer and yield \cref{eq:QD_EM}.
\end{proof}

\paragraph{Updating the experts.}
Obtaining the maximisers in \cref{eq:expert_update} is analogous to standard weighted likelihood estimation for the chosen expert distribution, which is usually a straightforward procedure. The substantive effect of the weighting enters through the responsibilities $\widetilde\tau_{itk}^{(m)}$, which determine which observations are attributed to each class. Thus, once the weighted E-step has formed the classes, the individual expert updates remain ordinary weighted distribution fits with $\widetilde\tau_{itk}^{(m)}$ acting as fractional class labels.

When an expert fit is ill-conditioned or a class has few effective observations, standard component-specific penalties or regularisation can be added to \cref{eq:expert_update} without changing the E-step.

\paragraph{Updating the risk-class regression.}
The influence of the weighting comes through the E-step quantities $\widetilde\xi_{ik}^{(m)}$. Conditional on these, the criterion $\widetilde Q_D$ has the same functional form as a mean--precision Dirichlet-regression log-likelihood \citep{HijaziJernigan2009,Maier2014}, with the complete-data log-compositions replaced by their conditional expectations. We therefore update $\mathbf B$ and $\phi$ by standard numerical Dirichlet-regression optimisation of \cref{eq:QD_EM}, imposing $\boldsymbol\beta_K=\mathbf0$ for identification and parameterising $\phi=\exp(\rho)$ to enforce positivity. Existing implementations such as \texttt{DirichletReg} \citep{Maier2014} provide useful computational templates, while the full hierarchical model can alternatively be fitted in a Bayesian framework using, for example, Stan \citep{CarpenterEtAl2017} or PyMC \citep{AbrilPlaEtAl2023}.

\subsubsection{Selecting the number of classes and response weights}\label{subsec:selecting_K}

The number of classes $K$ and, when applicable, the response weights are tuning choices rather than EM outputs. We select them by policyholder-level validation, keeping all observations from the same policyholder in the same data split. In practice, we fit a moderate grid of candidate values from several starting points and compare held-out predictive performance for the actuarial target of interest, together with class stability, effective class sizes, and interpretability. We prefer the smallest $K$ and least aggressive weighting whose out-of-sample predictive performance is practically comparable with the best candidate. Multiple starting values remain advisable because mixture criteria can contain local maxima. Once the number of classes and weights are selected, the model may be refitted on the combined training and validation samples before final evaluation on an untouched test portfolio.

\subsection{Statistical inference}\label{subsec:inference}

The EM/ECM algorithm produces point estimates but not automatic standard errors. A nonparametric policyholder cluster bootstrap can be used by resampling complete policyholder records and refitting the same weighted procedure. This preserves within-policy dependence and does not require interpreting the weighted criterion as a generative likelihood.

For each bootstrap sample $b$, let $\widehat\Theta^{*(b)}$ denote the refitted parameter vector. Standard errors and confidence intervals for parameters or smooth predictive functionals are obtained from the empirical distribution of the corresponding bootstrap estimates.

Conditional on fitted parameters, prediction intervals for a future scalar response are obtained directly from quantiles of
$F_{i,T_i+1}^{\mathrm{pred}}$. To incorporate parameter uncertainty, recompute the posterior weights and predictive distribution after each bootstrap refit,
\[
F_{i,T_i+1}^{\mathrm{pred},*(b)}(y)
=
\sum_{k=1}^{K}
\widehat p_{ik}^{*(b)}
F_k\!\left(
y\mid\be_{i,T_i+1};
\widehat\btheta_k^{*(b)}
\right).
\]
Sampling from these replicate-specific predictive distributions yields a bootstrap predictive distribution for a future outcome. If the quantity of interest is instead a premium or another predictive functional, evaluate that functional in every bootstrap replicate and form its interval from the resulting empirical distribution.

\section{Simulation study}
\label{sec:dmm-simulation-study}

This section uses a deliberately challenging simulation design to illustrate the two theoretical properties discussed in \cref{subsec:theory}: approximation of a complex portfolio distribution and policyholder-level learning from repeated experience. We model claim frequency and aggregate losses across two business lines and additionally examine the claim-severity distribution implied by the fitted model. The insurer wishes to perform experience rating for a joint policy accounting for both business lines.

\subsection{Insurance setting and data-generating mechanism}

Consider a policy carrying two coverages, such as personal motor own-damage or collision coverage and third-party liability coverage. For Line~1, both annual claim count $N_1$ and aggregate loss $L_1$ are retained. For Line~2, the modelling input contains only annual aggregate loss $L_2$. The fitted response is therefore
\[
\bY=(N_1,L_1,L_2)^\trans,
\qquad
L_\ell=\sum_{j=1}^{N_\ell}X_{\ell j}.
\]

We simulate a portfolio of $10{,}000$ policyholders. Each policyholder has a persistent latent risk $U\sim\mathcal N(0,1)$. Conditional on $U=u$, annual counts follow zero-inflated negative-binomial distributions and individual severities follow two-component lognormal mixtures. Specifically,
\[
N_{\ell}\mid U=u
\sim
\varrho_\ell(u)\delta_0+
\{1-\varrho_\ell(u)\}\operatorname{NB}_2\{\lambda_\ell(u),\kappa_\ell(u)\},
\]
with
\begin{align*}
\lambda_1(u)&=0.22e^{0.90u-0.90^2/2},
& \varrho_1(u)&=\operatorname{logit}^{-1}(0.20-1.50u),
& \kappa_1(u)&=1.20e^{0.35u},\\
\lambda_2(u)&=0.11e^{0.95u-0.95^2/2},
& \varrho_2(u)&=\operatorname{logit}^{-1}(0.35-1.35u),
& \kappa_2(u)&=e^{0.30u}.
\end{align*}
Writing $\operatorname{LN}(m,c)$ for a lognormal distribution with mean $m$ and coefficient of variation $c$,
\begin{align*}
X_1\mid U=u
&\sim
e^{0.08u}\left[
\{1-\chi_1(u)\}\operatorname{LN}(1800,0.26)
+\chi_1(u)\operatorname{LN}(8500,0.20)
\right],\\
X_2\mid U=u
&\sim
e^{0.10u}\left[
\{1-\chi_2(u)\}\operatorname{LN}(3200,0.30)
+\chi_2(u)\operatorname{LN}(13500,0.17)
\right],
\end{align*}
where
\[
\chi_1(u)=\operatorname{logit}^{-1}\{\operatorname{logit}(0.25)+2.20u\},
\qquad
\chi_2(u)=\operatorname{logit}^{-1}\{\operatorname{logit}(0.22)+2.00u\}.
\]
Thus, latent risk affects both the scale and the modality of the severity distributions.

The parameter values are chosen to produce plausible insurance-portfolio magnitudes while deliberately accentuating features that are difficult for simple parametric models, including zero inflation, overdispersion, bimodal severity, and dependence across business lines. This design makes the approximation problem visible and tests whether a DMMM with simpler component distributions can recover the resulting portfolio and policyholder-level distributions.

\subsection{DMMM specification}

For latent class $k$, we use the factorisation
\[
f_k(n,\ell_1,\ell_2)
=
\operatorname{Pois}(n;\lambda_k)\,
\operatorname{Gamma}(\ell_1;n\eta_k,\vartheta_k)\,
\operatorname{Tweedie}(\ell_2;\mu_k,\varphi_k^{\mathrm{Tw}},\nu_k^{\mathrm{Tw}}),
\]
where $\eta_k$ and $\vartheta_k$ are the Gamma shape and scale parameters, respectively; the aggregate-loss term is understood as a point mass at zero when $n=0$; and $\varphi_k^{\mathrm{Tw}}$ and $\nu_k^{\mathrm{Tw}}$ are the Tweedie dispersion and power parameters, with $1<\nu_k^{\mathrm{Tw}}<2$. For both lines, this specification has a compound-Poisson--Gamma interpretation within each expert. In particular, for Line~1,
\[
N_1\mid k\sim\operatorname{Pois}(\lambda_k),
\qquad
X_{1j}\mid k\sim\operatorname{Gamma}(\eta_k,\vartheta_k),
\qquad
L_1=\sum_{j=1}^{N_1}X_{1j}.
\]

Although $X_1$ is not itself an element of the fitted response $\bY$, the DMMM therefore implies a distribution for realised claim severity. For portfolio class weights
$\bpi=(\pi_1,\ldots,\pi_K)^\trans$,
\[
f_{X_1}(x;\bpi)
=
\sum_{k=1}^K \pi_k^{(X)}
\operatorname{Gamma}(x;\eta_k,\vartheta_k),
\qquad
\pi_k^{(X)}
=
\frac{\pi_k\lambda_k}{\sum_{h=1}^K\pi_h\lambda_h}.
\]
The frequency adjustment is required because a randomly selected claim is more likely to arise from an expert with a larger Poisson rate; this is the usual size-biasing induced when claim frequency and severity are dependent
\citep{BlierWong2026}. The Tweedie component has the analogous compound-Poisson--Gamma representation for Line~2, but the illustration will focus on Line~1 for simplicity.

In an empirical application, the expert parameters would ordinarily be estimated jointly with the class structure. Here, to isolate how approximation quality changes with the richness of the expert dictionary, we fix a nested sequence of expert sets. A master grid of 100 experts is specified before estimation using five risk anchors for each line and two severity regimes per line. The candidate expert sets for $K\in\{4,12,20,36,68,100\}$ are literal nested subsets of this grid. For each value of $K$, only the portfolio weights are estimated by maximum likelihood from the simulated sample of $10{,}000$ policyholders.

\subsection{Portfolio approximation}

\Cref{tab:dmm-sim-portfolio} reports binned Kullback--Leibler (KL) divergence for the four displayed quantities and the joint response $(N_1,L_1,L_2)$ as the number of experts increases. \Cref{fig:dmm-sim-portfolio} compares the corresponding fitted probability mass functions and densities. Approximation quality improves sharply as the expert dictionary becomes richer, with most of the improvement occurring by $K=12$--$20$. The $K=36$ model provides a further, although smaller, reduction in joint KL divergence while remaining relatively parsimonious. Importantly, the implied distribution of $X_1$ also closely recovers the true claim-severity distribution even though individual severity is not included directly in the fitted response.

\begin{table}[htbp]
\centering
\caption{Portfolio approximation on the $10{,}000$-policyholder sample (KL divergence, nats).}
\label{tab:dmm-sim-portfolio}
\small
\begin{tabular}{rrrrrr}
\toprule
$K$ & $N_1$ & $X_1$ & $L_1$ & $L_2$ & Joint\\
\midrule
4 & 0.1022 & 0.1947 & 0.0992 & 0.0481 & 0.1391 \\
12 & 0.0024 & 0.0340 & 0.0087 & 0.0053 & 0.0155 \\
20 & 0.0041 & 0.0260 & 0.0072 & 0.0047 & 0.0104 \\
36 & 0.0034 & 0.0263 & 0.0072 & 0.0045 & 0.0078 \\
68 & 0.0031 & 0.0264 & 0.0069 & 0.0044 & 0.0070 \\
100 & 0.0031 & 0.0264 & 0.0069 & 0.0045 & 0.0072 \\
\bottomrule
\end{tabular}
\end{table}

The shared expert membership also reproduces much of the dependence structure. \Cref{tab:dmm-sim-corr} reports the correlations implied by the nested expert sets. The fitted DMMM captures the strong association between frequency and aggregate loss, the weaker cross-line dependence, and the associations involving individual severity. The latter associations are notable because $X_1$ is recovered only through the compound representation of the Line~1 expert.

\begin{table}[htbp]
\centering
\caption{True and implied correlations across the nested expert sets.}
\label{tab:dmm-sim-corr}
\small
\begin{tabular}{lrrrrrr}
\toprule
 & $\rho(N_1,L_1)$ & $\rho(N_1,L_2)$ & $\rho(L_1,L_2)$ & $\rho(N_1,X_1)$ & $\rho(L_1,X_1)$ & $\rho(L_2,X_1)$\\
\midrule
True & 0.935 & 0.234 & 0.271 & 0.286 & 0.501 & 0.210 \\
$K=4$ & 0.886 & 0.000 & 0.003 & 0.000 & 0.881 & 0.030 \\
$K=12$ & 0.931 & 0.361 & 0.377 & 0.299 & 0.617 & 0.227 \\
$K=20$ & 0.920 & 0.283 & 0.307 & 0.261 & 0.619 & 0.212 \\
$K=36$ & 0.918 & 0.218 & 0.240 & 0.248 & 0.619 & 0.192 \\
$K=68$ & 0.918 & 0.197 & 0.222 & 0.241 & 0.618 & 0.192 \\
$K=100$ & 0.917 & 0.197 & 0.222 & 0.238 & 0.617 & 0.192 \\
\bottomrule
\end{tabular}
\end{table}

\begin{figure}[H]
\centering
\includegraphics[width=.49\textwidth]{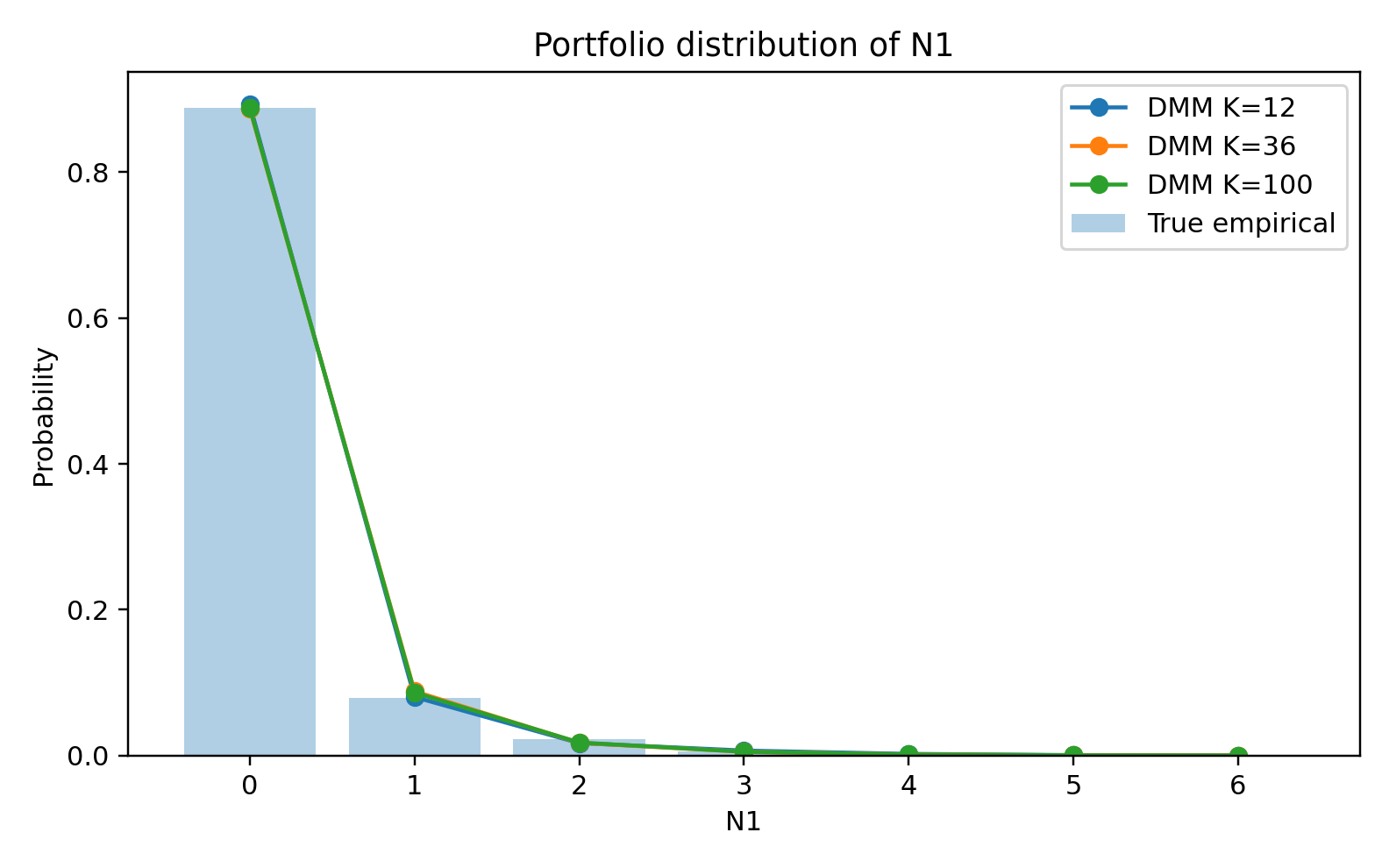}\hfill
\includegraphics[width=.49\textwidth]{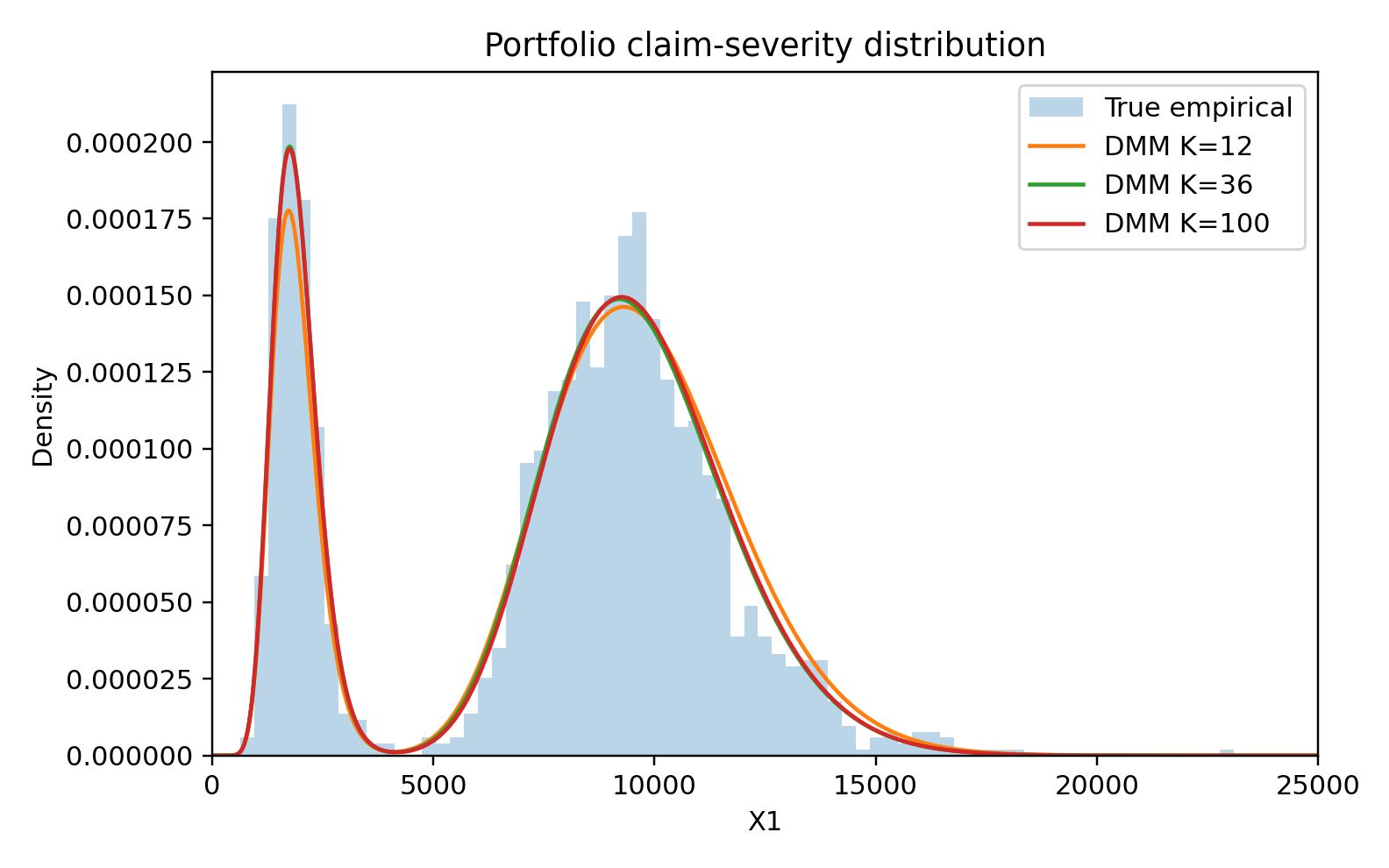}\\[0.4em]
\includegraphics[width=.49\textwidth]{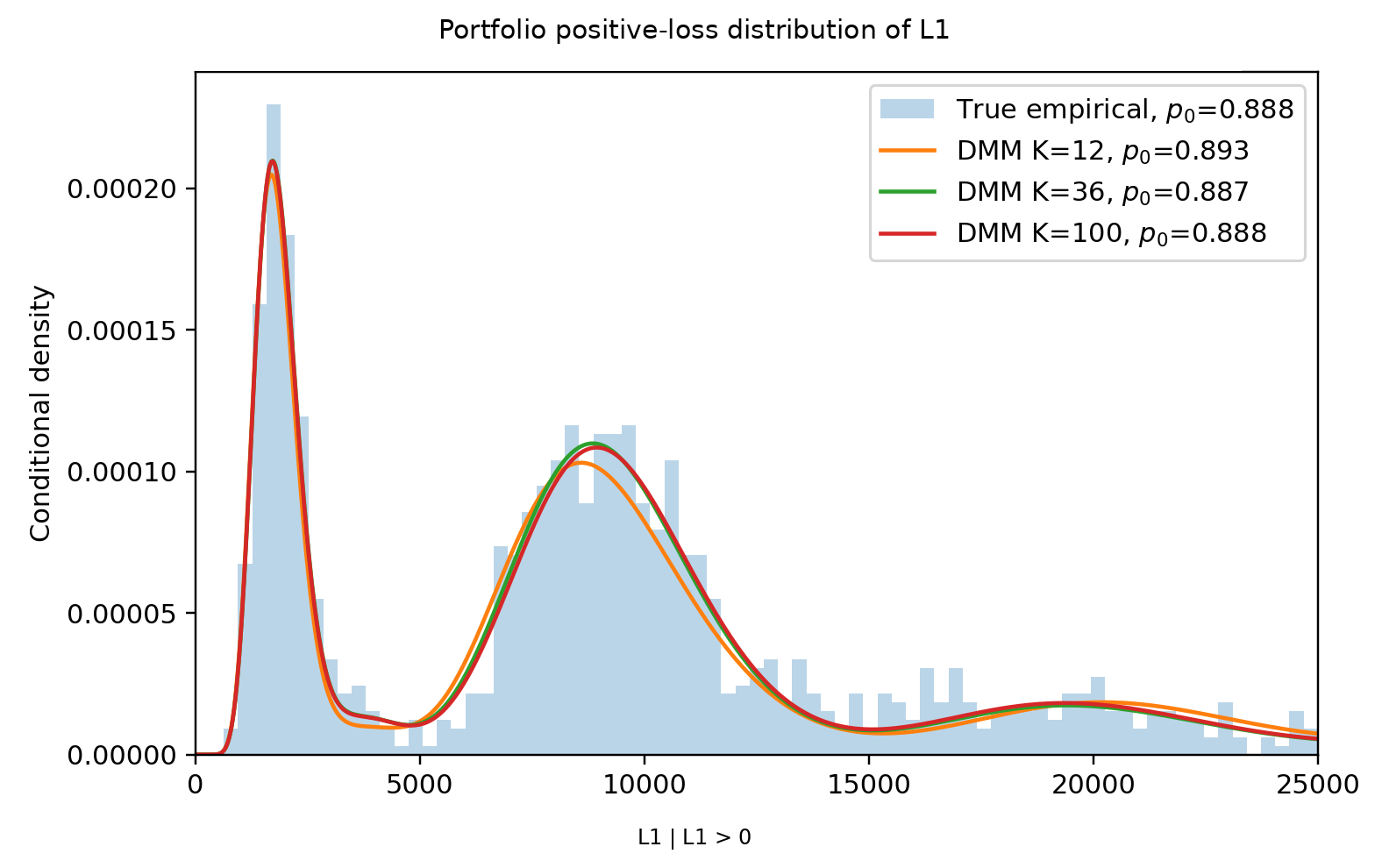}\hfill
\includegraphics[width=.49\textwidth]{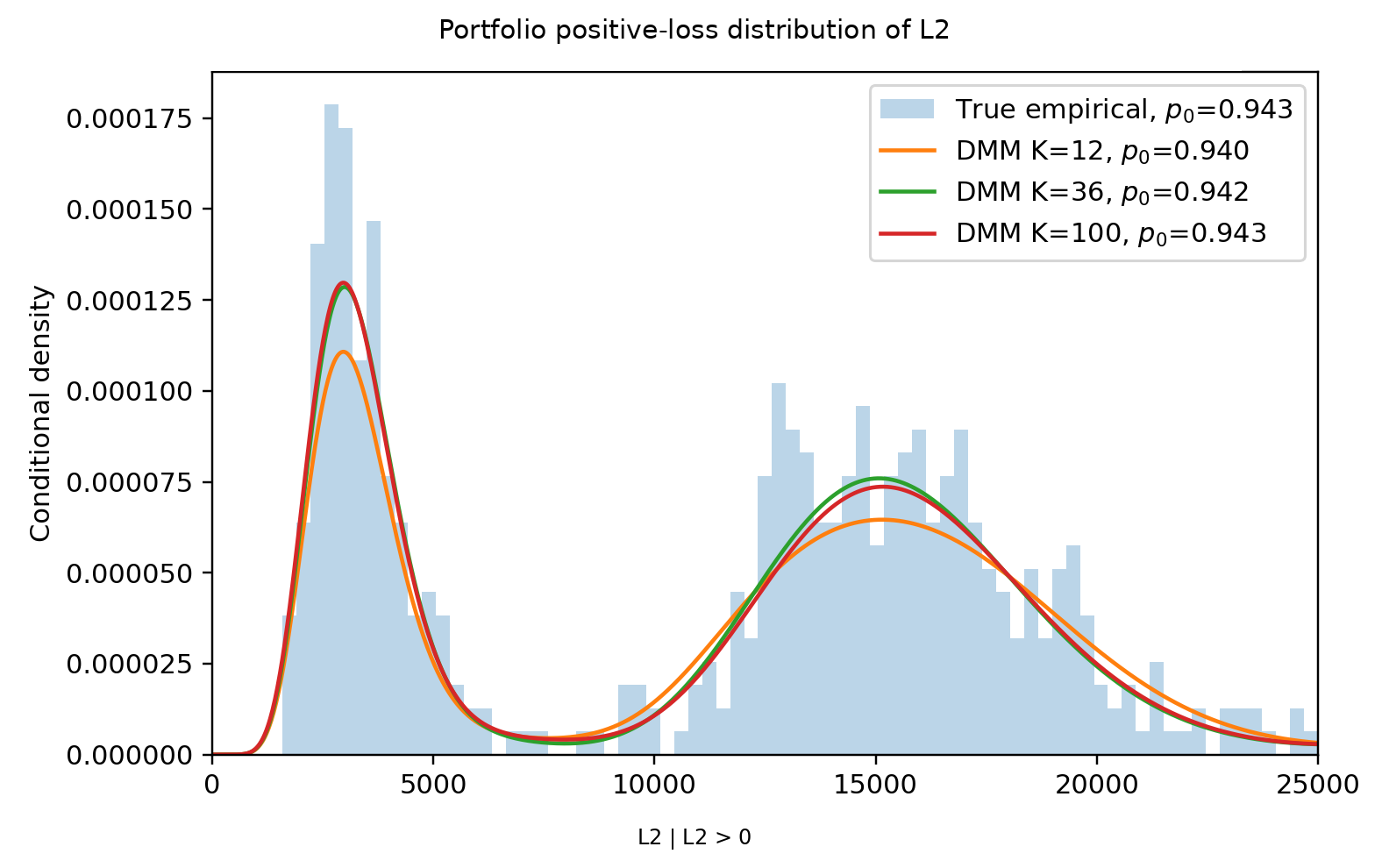}
\caption{Portfolio marginals for selected nested expert sets.}
\label{fig:dmm-sim-portfolio}
\end{figure}

\subsection{Individual approximation and credibility}

We select three fixed policyholder risks that differ substantially from the baseline portfolio: $u=-0.25$, $1.00$, and $1.75$, representing lower, intermediate, and high risk, respectively. For each risk level, we simulate 50 independent histories with lengths $T\in\{3,10,20,50,100\}$. Each displayed predictive curve is the posterior predictive distribution averaged across the 50 histories at the corresponding value of $T$. The true $L_1$ and $L_2$ distributions are obtained by deterministic numerical convolution on a fine lattice.

\Cref{tab:dmm-sim-individual} reports the mean KL divergence by risk level, response, and history length. \Cref{fig:low_risk,fig:mid_risk,fig:high_risk} compare the fitted posterior predictive distributions with their true counterparts. As more policyholder experience is observed, the predictive distributions move away from the common portfolio distribution at $T=0$ and toward the corresponding policyholder-specific generating distributions. This movement is reflected in the generally declining KL divergences for the directly modelled annual responses and their joint distribution.

The model recovers the policyholder-specific patterns reasonably well even from relatively short histories because the shared class composition is learned jointly from $N_1$, $L_1$, and $L_2$, and this updating also changes the implied distribution of $X_1$. The problem is especially difficult for $X_1$, which is not directly included in the fitted response, and for the aggregate losses. Claim occurrence is sparse, so even a long annual history may contain only a small number of years with positive claims and therefore limited direct information about severity. Learning of severity and aggregate-loss distributions is consequently slower than learning of frequency. The shared latent composition allows information from frequency and both aggregate losses to update all four predictive distributions simultaneously.

\begin{table}[htbp]
\centering
\caption{Mean posterior predictive KL divergence}
\label{tab:dmm-sim-individual}
\small
\begin{tabular}{llrrrrr}
\toprule
Risk level & $T$ & $N_1$ & $X_1$ & $L_1$ & $L_2$ & Joint\\
\midrule
Lower & 3 & 0.0175 & 0.5906 & 0.0357 & 0.0161 & 0.0478 \\
 & 10 & 0.0083 & 0.4080 & 0.0214 & 0.0082 & 0.0291 \\
 & 20 & 0.0043 & 0.3300 & 0.0151 & 0.0048 & 0.0208 \\
 & 50 & 0.0044 & 0.1444 & 0.0089 & 0.0042 & 0.0143 \\
 & 100 & 0.0048 & 0.1040 & 0.0079 & 0.0038 & 0.0130 \\
\addlinespace
Intermediate & 3 & 0.0534 & 0.1558 & 0.0814 & 0.0449 & 0.1513 \\
 & 10 & 0.0433 & 0.1320 & 0.0608 & 0.0374 & 0.1273 \\
 & 20 & 0.0224 & 0.1506 & 0.0420 & 0.0280 & 0.1012 \\
 & 50 & 0.0103 & 0.0800 & 0.0171 & 0.0170 & 0.0680 \\
 & 100 & 0.0078 & 0.0659 & 0.0124 & 0.0135 & 0.0594 \\
\addlinespace
High & 3 & 0.1220 & 0.0596 & 0.1473 & 0.1183 & 0.2719 \\
 & 10 & 0.0639 & 0.0499 & 0.0737 & 0.0685 & 0.1601 \\
 & 20 & 0.0486 & 0.0696 & 0.0656 & 0.0431 & 0.1296 \\
 & 50 & 0.0149 & 0.0695 & 0.0294 & 0.0300 & 0.0860 \\
 & 100 & 0.0085 & 0.0580 & 0.0194 & 0.0180 & 0.0631 \\
\bottomrule
\end{tabular}
\end{table}

\begin{figure}[H]
\centering
\includegraphics[width=.8\textwidth]{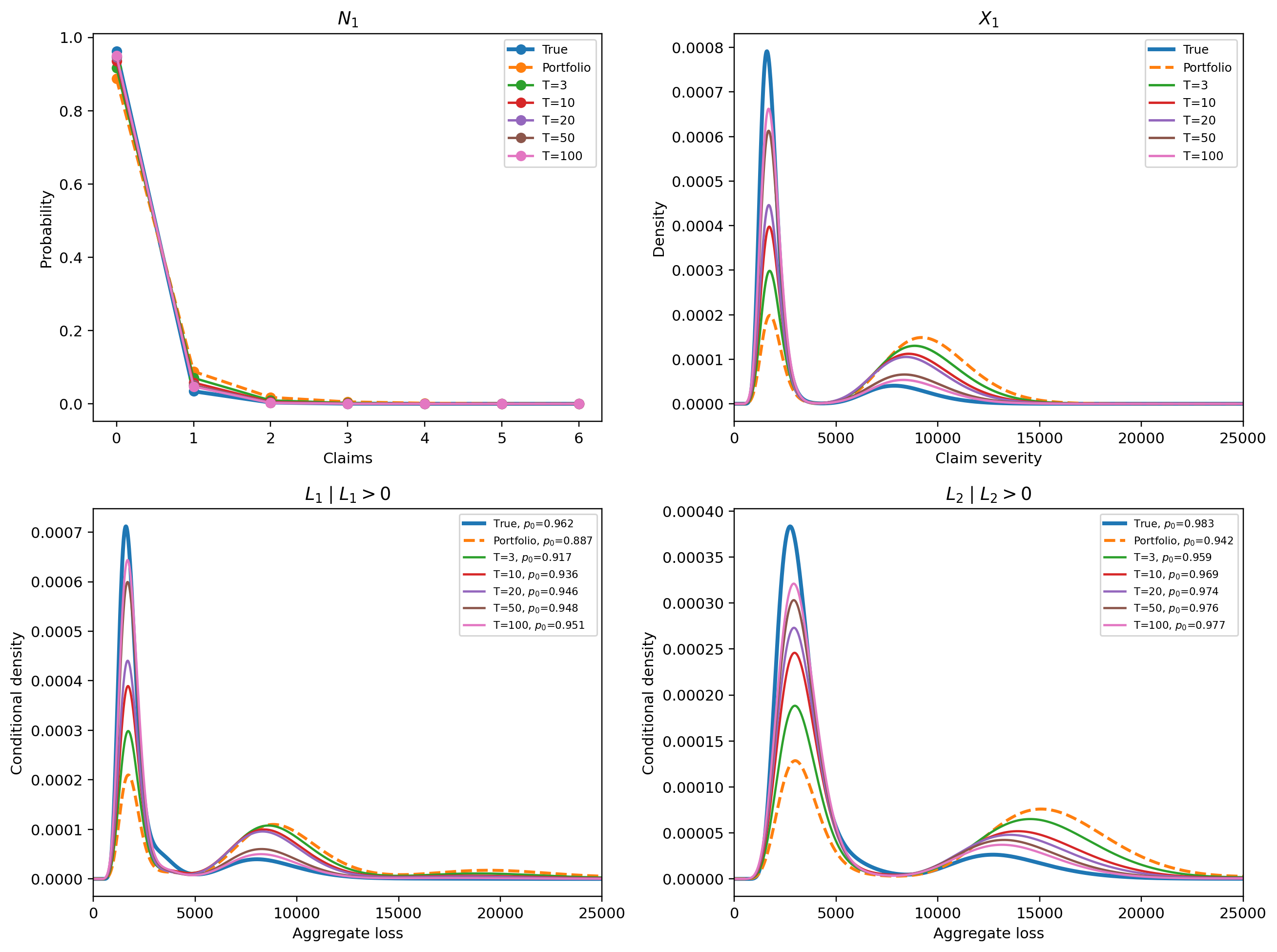}
\caption{Lower-risk policyholder ($u=-0.25$).}
\label{fig:low_risk}
\end{figure}

\begin{figure}[H]
\centering
\includegraphics[width=.8\textwidth]{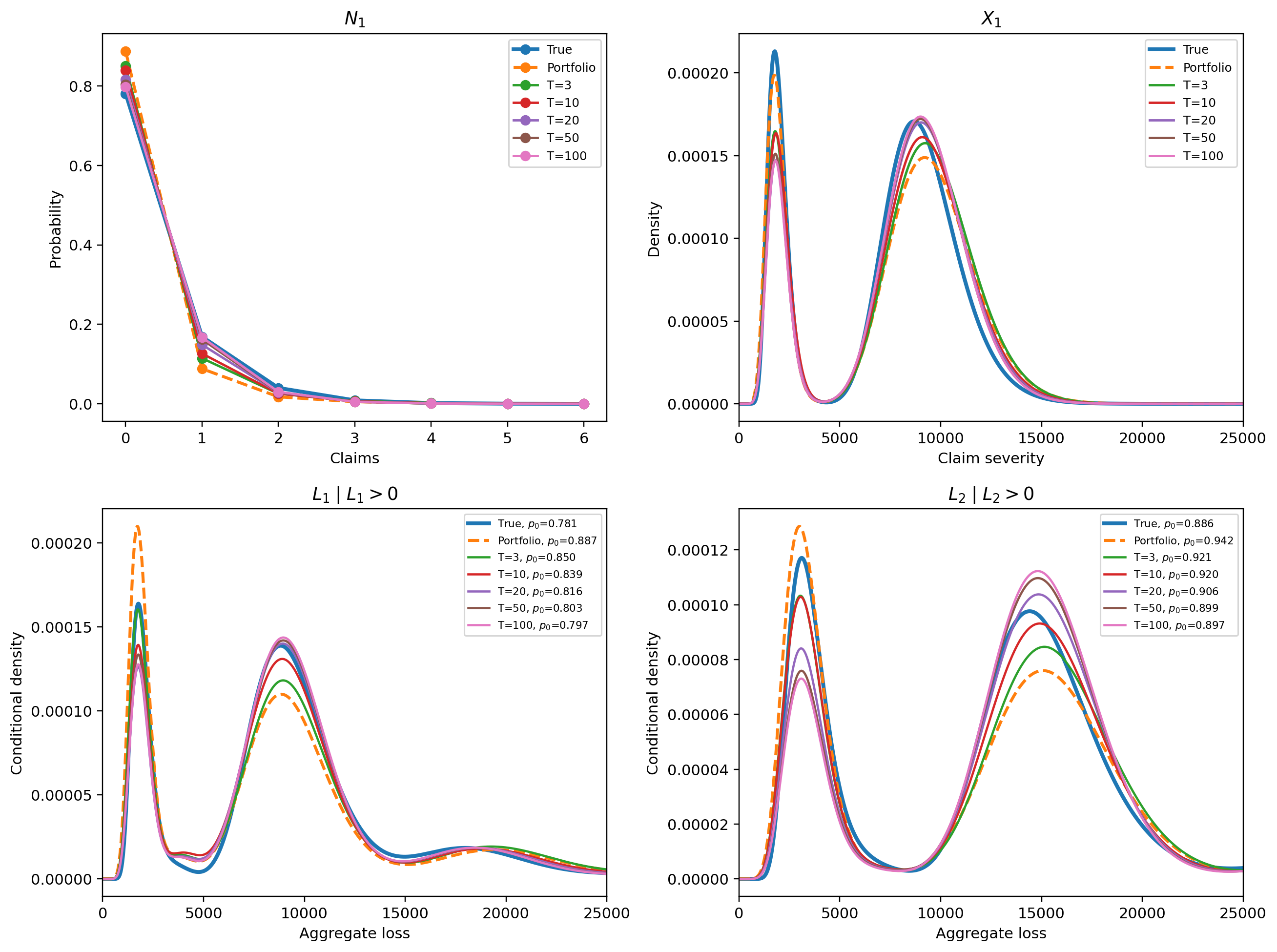}
\caption{Intermediate-risk policyholder ($u=1.00$).}
\label{fig:mid_risk}
\end{figure}

\begin{figure}[H]
\centering
\includegraphics[width=.8\textwidth]{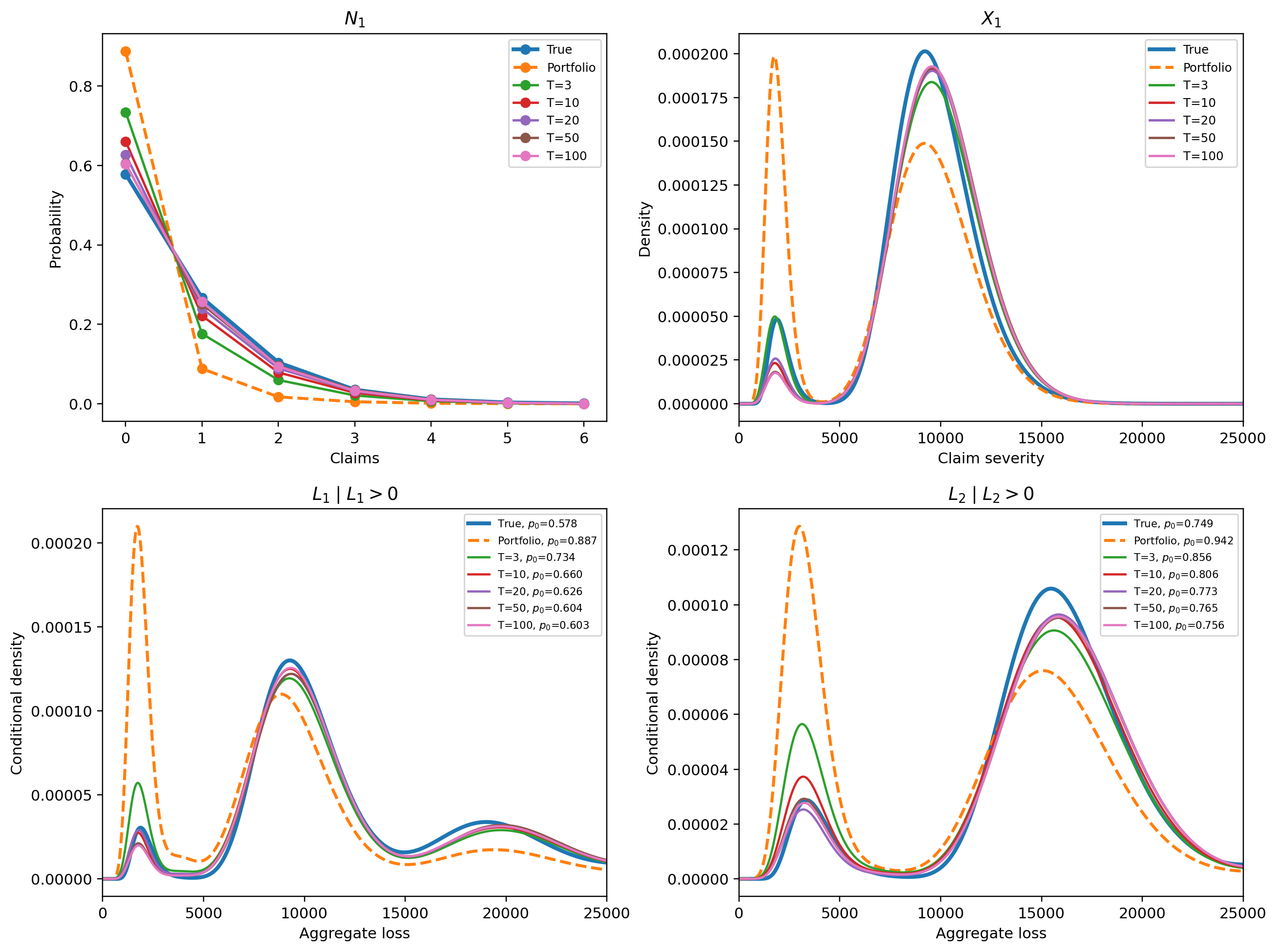}
\caption{High-risk policyholder ($u=1.75$).}
\label{fig:high_risk}
\end{figure}

Finally, \cref{tab:dmm-sim-premium} shows the evolution of the corresponding credibility premiums. For simplicity, we consider the pure premium principle, which is given by
\[
\mathbb{E}_{\widehat f_{i,T}}(L_1+L_2)
=
\sum_{k=1}^K
\widehat p_{ik,T}
\left(\lambda_k\eta_k\vartheta_k+\mu_k\right),
\]
where $\widehat p_{ik,T}$ is the posterior predictive class weight after $T$ periods. The premiums move clearly from the common portfolio premium toward the corresponding policyholder-specific targets as the policyholder history grows.

\begin{table}[htbp]
\centering
\caption{Mean predictive pure premium.}
\label{tab:dmm-sim-premium}
\small
\begin{tabular}{lrrrrrrr}
\toprule
Risk level & True & $T=0$ & $T=3$ & $T=10$ & $T=20$ & $T=50$ & $T=100$\\
\midrule
Lower & 204 & 1,978 & 982 & 722 & 333 & 239 & 224 \\
Intermediate & 3,585 & 1,978 & 2,482 & 2,547 & 3,157 & 3,357 & 3,471 \\
High & 10,919 & 1,978 & 7,101 & 9,052 & 9,884 & 10,630 & 10,693 \\
\bottomrule
\end{tabular}
\end{table}

\section{Numerical application: credibility and telematics}
\label{sec:application}

We consider an application of credibility in an increasingly important area of motor insurance: vehicle telematics. Telematics devices record how, when, and how much a vehicle is driven, including mileage, driving times, acceleration and braking, turning behaviour, and related patterns. Unlike traditional rating variables, these measurements accumulate during the policy period and provide a rich record of observed driving behaviour \citep{ChanEtAl2025}.

This creates a natural experience-rating problem. Claim history is informative but sparse: over short periods, most policyholders have no claims and relatively few have more than one. Telematics provides a much denser, although indirect, signal that may refine this limited experience. Recent studies show that behavioural information can improve risk classification, next-period claim prediction, and premium updating beyond conventional rating information \citep{DenuitGuillenTrufin2019,HenckaertsAntonio2022,ChanEtAl2025}. Its value, however, depends on how behavioural and claims information are combined and on the dependence structure allowed by the model. Despite the broader development of telematics ratemaking, explicitly credibility-based approaches remain relatively limited \citep{DenuitGuillenTrufin2019,CorradinEtAl2022}.

The application focuses on three questions. First, can the DMMM provide useful credibility updating under different degrees of behavioural persistence? Second, does telematics contribute information beyond conventional rating characteristics and a short claim history? Third, does the distributional formulation provide useful predictive information beyond the updated conditional mean? These questions allow us to study both the practical value of telematics as collateral experience and the flexibility of the proposed credibility structure. For simplicity and benchmarking, we focus on the pure premium principle.

Because a genuinely longitudinal public portfolio containing detailed claims and telematics is not available for this experiment, we construct controlled three-year pseudo-panels from the synthetic motor insurance portfolio of \citet{SoBoucherValdez2021}, using a data-driven procedure motivated by the SEAT framework of \citet{SoJeong2025} as we explain next.

\subsection{Data preprocessing}
\label{subsec:application_data}

The source portfolio released by \citet{SoBoucherValdez2021} contains $100{,}000$ annual policy records and 52 conceptual variables. These consist of 11 traditional policy and driver characteristics, 39 telematics variables, and two insurance responses. The complete set of variables, grouped according to their role in the present application, is summarised in \Cref{tab:application_variables}.

\begin{table}[htbp]
\centering
\caption{Variables in the source telematics portfolio and their use in the application.}
\label{tab:application_variables}
\scriptsize
\renewcommand{\arraystretch}{1.08}

\begin{tabularx}{0.96\textwidth}{
    >{\raggedright\arraybackslash}p{0.17\textwidth}
    >{\raggedright\arraybackslash}p{0.55\textwidth}
    >{\raggedright\arraybackslash}X
}
\toprule
\textbf{Group} & \textbf{Variables} & \textbf{Use} \\
\midrule

Traditional
(11)
&
\texttt{Duration};
\texttt{Insured.age}, \texttt{Insured.sex};
\texttt{Car.age}, \texttt{Car.use};
\texttt{Marital}, \texttt{Credit.score};
\texttt{Region}, \texttt{Territory};
\texttt{Annual.miles.drive};
\texttt{Years.noclaims}
&
Baseline rating and pseudo-panel matching.
\\

\addlinespace
Telematics:
driving allocation
(17)
&
\texttt{Annual.pct.driven}, \texttt{Total.miles.driven};
\texttt{Pct.drive.\{mon,tue,wed,thu,fri,sat,sun\}};
\texttt{Pct.drive.\{2hrs,3hrs,4hrs\}};
\texttt{Pct.drive.\{wkday,wkend\}};
\texttt{Pct.drive.rush\{am,pm\}};
\texttt{Avgdays.week}
&
Exposure, PCA, and behavioural matching.
\\

\addlinespace
Telematics:
driving intensity
(22)
&
\texttt{Accel.\{06,08,09,11,12,14\}miles};
\texttt{Brake.\{06,08,09,11,12,14\}miles};
\texttt{Left.turn.intensity\{08,09,10,11,12\}};
\texttt{Right.turn.intensity\{08,09,10,11,12\}}
&
PCA and behavioural matching.
\\

\addlinespace
Insurance
responses (2)
&
\texttt{NB\_Claim}, \texttt{AMT\_Claim}
&
Claim count is modelled; claim amount is not used.
\\
\midrule

\multicolumn{3}{l}{
\textit{Total: 52 variables = 11 traditional + 39 telematics + 2 insurance responses.}
}
\\

\bottomrule
\end{tabularx}
\end{table}

The present application focuses on claim frequency. Actual annual mileage is converted from miles to exposure in units of 100 kilometres and enters the claim-count models as an exposure offset. For PCA, \texttt{Total.miles.driven} is reserved for exposure, while \texttt{Pct.drive.sun} and \texttt{Pct.drive.wkend} are omitted as redundant compositional coordinates, leaving 36 telematics variables.

The baseline covariate vector $\mathbf x_i$ contains driver age and sex, vehicle age and use, and region. Driver and vehicle ages are standardised using the training-sample means and standard deviations; private vehicle use and rural region are the reference categories. The same $\mathbf x_i$ is used for all three annual observations of policyholder $i$. Realised mileage enters only through the claim exposure $e_{it}$. The longitudinal construction and the low-dimensional telematics representation are described in \cref{subsec:application_panel,subsec:application_pca}.

\subsubsection{Data-driven longitudinal pseudo-panel construction}
\label{subsec:application_panel}

The source portfolio is cross-sectional. In particular, \citet{SoJeong2025} identify longitudinality and serial dependence as natural extensions of the SEAT framework. To study credibility updating without imposing a parametric longitudinal data-generating model, we construct three-year histories by empirical donor matching.

A baseline donor determines the synthetic policyholder's initial traditional and behavioural characteristics. For subsequent years, eligible donors must match on sex, region, vehicle use, and age band, and are ranked using six traditional and 18 behavioural coordinates. The traditional block includes age, vehicle age, marital status, credit score, declared annual mileage, and territory, while the behavioural block includes realised mileage, driving-time shares, driving frequency, acceleration, braking, and turning measures. Claim count and claim amount are deliberately excluded from the matching rule.

Let $\Delta T_\ell$ and $\Delta B_\ell$ denote the transformed and scaled differences in the traditional and behavioural matching coordinates. The matching distance is
\begin{equation}
D^2
=
\frac{1}{6}\sum_{\ell=1}^{6}\Delta T_\ell^2
+
\frac{a_B}{18}\sum_{\ell=1}^{18}\Delta B_\ell^2.
\label{eq:application_donor_distance}
\end{equation}
For each renewal, one record is sampled uniformly from up to the 25 nearest eligible donors. A donor is not reused within the same synthetic policyholder, although it may appear in different synthetic policies.

Once a donor is selected, its telematics variables and insurance outcomes are inherited jointly. Thus, no model is used to generate claims or behavioural outcomes, and their within-year association is inherited from the empirical source records. Behavioural persistence is controlled through $a_B$: $a_B=1$ gives the traditional and behavioural matching blocks equal total weight, while $a_B=4$ places greater emphasis on similarity in the original behavioural characteristics. Because subsequent donors are still sampled from the nearest eligible records, the resulting histories retain substantial year-to-year variation. The two scenarios are therefore interpreted as controlled moderate- and high-persistence constructions rather than as panels calibrated to predetermined serial correlations.

Each scenario contains $15{,}000$ synthetic policyholders observed for three annual periods. Policyholders are randomly divided into training, validation, and test sets using a 70-15-15 split, corresponding to $10{,}500$, $2{,}250$, and $2{,}250$ policyholders, respectively. All three annual observations from a given policyholder remain in the same partition.

\subsubsection{Principal-component representation of telematics}
\label{subsec:application_pca}

The telematics block is high-dimensional and contains groups of strongly related measurements. We use principal component analysis (PCA) on the training set to obtain a small number of integrated behavioural dimensions rather than selecting a few individual event measures. Of the 39 telematics variables, we exclude total mileage because it is used separately as claim exposure, Sunday driving share as the reference for the day-of-week composition, and weekend driving share because it is the complement of weekday driving. The remaining 36 variables are standardised using the training-sample means and standard deviations. PCA is fitted separately under the moderate- and high-persistence constructions, and the estimated centering, scaling, and loadings are then fixed and applied unchanged to the corresponding validation and test policies.

The first three components explain $51.55\%$ of the standardised telematics variation under moderate persistence and $51.37\%$ under high persistence. There is also some levelling off after the third component: PC2 and PC3 each explain approximately $12$--$14\%$ of the variation, while the contribution falls to about $7$--$8\%$ for PC4 and PC5. We therefore retain the first three components as a parsimonious low-dimensional representation of the telematics information.

The retained components have clear behavioural interpretations. As shown in \cref{fig:application_pca_loadings}, PC1 is dominated by acceleration and braking intensities, PC2 primarily represents left-turn intensity, and PC3 primarily represents right-turn intensity. These descriptions refer to the variables with the largest loadings; each PC remains a linear combination of all 36 retained telematics variables. We therefore interpret the three PCs as integrated behavioural dimensions rather than individual driving measures. The loading patterns are highly similar across the two persistence scenarios.

\begin{figure}[htbp]
\centering
\includegraphics[width=\textwidth]{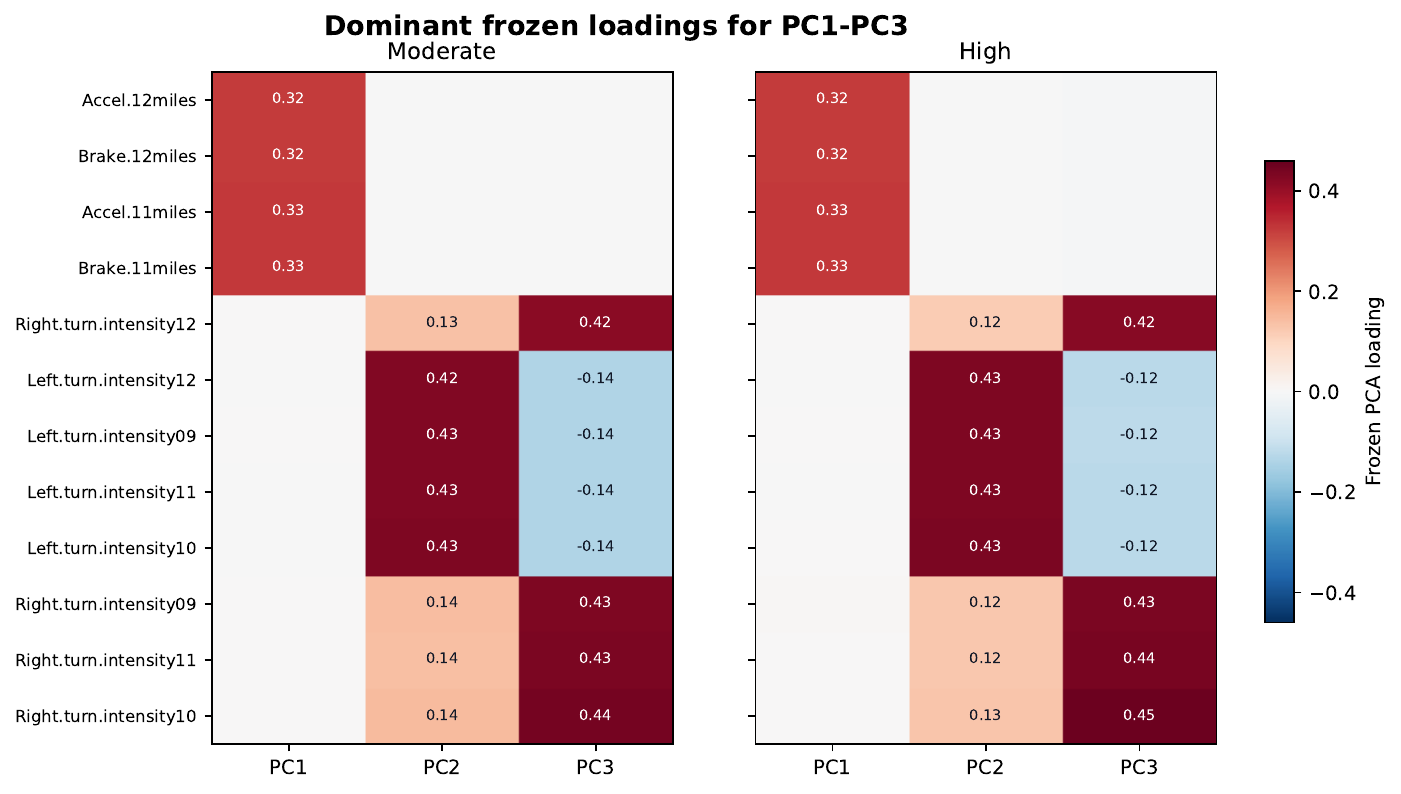}
\caption{Dominant loadings for the first three telematics principal components. Only variables with the largest absolute loadings are displayed.}
\label{fig:application_pca_loadings}
\end{figure}

Let $S_{it1},S_{it2},S_{it3}$ denote the three retained PC scores. The annual multivariate response is
\begin{equation}
\bY_{it}
=
\left(
N_{it},
S_{it1},
S_{it2},
S_{it3}
\right)^\trans,
\label{eq:application_response}
\end{equation}
where $N_{it}$ is annual claim count.

Because persistence is imposed through matching on the original behavioural characteristics rather than on the PC scores themselves, we also examine the serial dependence produced by the construction. \Cref{tab:application_persistence} reports pooled adjacent-year correlations for several representative original telematics variables and for the three retained PCs.

\begin{table}[htbp]
\centering
\caption{Empirical persistence of selected telematics variables and principal components.}
\label{tab:application_persistence}
\small
\begin{tabular}{llrrrr}
\toprule
& & \multicolumn{2}{c}{Moderate persistence} & \multicolumn{2}{c}{High persistence} \\
\cmidrule(lr){3-4}\cmidrule(lr){5-6}
Representation & Variable & Pearson & Spearman & Pearson & Spearman \\
\midrule
Original telematics & Acceleration (\texttt{Accel.09miles}) & 0.375 & 0.314 & 0.402 & 0.392 \\
& Braking (\texttt{Brake.09miles}) & 0.391 & 0.489 & 0.402 & 0.564 \\
& Rush-hour share (AM) & 0.427 & 0.389 & 0.513 & 0.464 \\
& Rush-hour share (PM) & 0.468 & 0.407 & 0.566 & 0.501 \\
\addlinespace
PCA scores & PC1: acceleration/braking & 0.326 & 0.609 & 0.343 & 0.698 \\
& PC2: left-turn intensity & 0.051 & 0.405 & 0.060 & 0.489 \\
& PC3: right-turn intensity & 0.010 & 0.388 & 0.268 & 0.486 \\
\bottomrule
\end{tabular}

\begin{minipage}{0.92\textwidth}
\footnotesize
\textit{Notes:} Entries are pooled adjacent-year correlations across policyholders, obtained by correlating the stacked year-1 and year-2 observations with the corresponding stacked year-2 and year-3 observations. The PCA scores are computed using the scenario-specific transformations estimated from the training samples and used in the subsequent analysis.
\end{minipage}
\end{table}

The selected original telematics measures are positively persistent under both constructions, with generally stronger dependence under the high-persistence design. This pattern extends beyond the variables displayed in the table: across all 36 telematics variables entering the PCA, the pooled adjacent-year correlation is larger under high persistence for 33 variables using Pearson correlation and for all 36 variables using Spearman rank correlation. The retained PCs preserve the same broad ordering, although the magnitude of linear dependence varies considerably across components. In particular, PC2 and PC3 show substantially stronger rank than linear persistence, consistent with the skewed and sparse threshold-event variables underlying these components. The empirical correlations therefore confirm that the two constructions differ meaningfully in longitudinal behavioural persistence without requiring that this difference translate uniformly to every principal component.

At the same time, the two pseudo-panels remain similar in their marginal characteristics. Annual claim frequency is approximately $0.072$ in both settings, with about $93.3\%$ of policy-year observations having no claims; the marginal distributions of exposure and the retained telematics components are likewise similar. Thus, the main distinction between the two scenarios is longitudinal dependence induced through the strength of behavioural matching, rather than deliberately different marginal portfolios. Full descriptive statistics are reported in \Cref{sec:telematics-stats}.

\subsection{Model specification}
\label{subsec:application_selection}

For the present application, we specify a DMMM in which each latent class has a Poisson claim-count expert and three Normal telematics experts:
\begin{align}
N_{it}\mid C_{it}=k
&\sim
\operatorname{Poisson}(e_{it}\lambda_k),
\label{eq:application_dir_claim}\\
S_{itj}\mid C_{it}=k
&\sim
\mathcal N(\mu_{jk},\sigma_{jk}^2),
\qquad j=1,2,3.
\label{eq:application_dir_pc}
\end{align}

The multinomial-logit gate defined in \cref{eq:softmax} represents the policyholder's a priori risk classification. Here $\mathbf x_i$ contains an intercept and the baseline variables listed above. The claim-count response is assigned weight $\omega_N=1$. The three PC responses share a total telematics weight $\omega_{\mathrm{tel}}$,
\[
\omega_{S_1}=\omega_{S_2}=\omega_{S_3}
=\frac{\omega_{\mathrm{tel}}}{3}.
\]
We set $\omega_{\mathrm{tel}}=1$ in the main analysis, so claim count and the complete telematics block receive equal total weight in class formation.

The DMMM is fitted using scale-aware direct random initialisations. The fitting algorithm is run for at most $10{,}000$ iterations, with early stopping when the absolute objective increment falls below $10^{-3}$. The number of latent classes is selected using the validation cohort only. We compare $K\in\{12,16,20,24\}$. \Cref{fig:application_K_selection} reports the full-history validation MSE and predictive log score.

\begin{figure}[htbp]
\centering
\includegraphics[width=\textwidth]{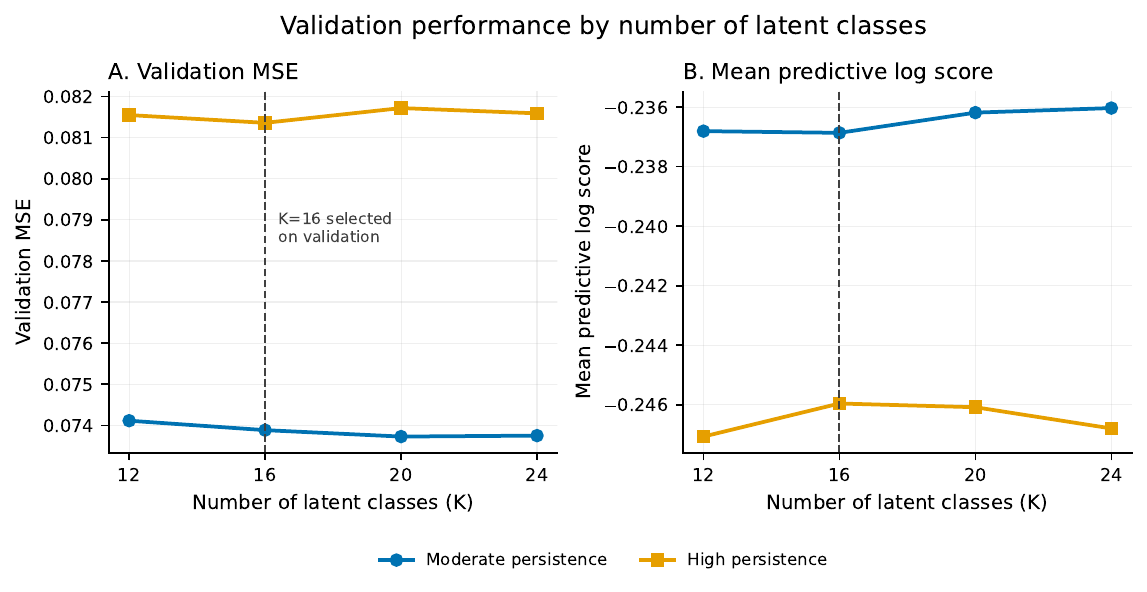}
\caption{Validation performance of the proposed model across the candidate
numbers of latent classes under the expanded fixed baseline specification.
Lower MSE and higher predictive log score indicate better validation
performance. The vertical line marks the validation-selected specification
$K=16$.}
\label{fig:application_K_selection}
\end{figure}

Under moderate persistence, validation performance is nearly indistinguishable for $K=16$, $20$, and $24$, whereas $K=16$ has the lowest MSE and highest log score under high persistence. We therefore fix $K=16$ before evaluating the test cohort, providing a parsimonious specification that is competitive in both settings. The classes are interpreted as a mixed-membership basis for heterogeneous risk--behaviour patterns rather than mutually exclusive driver types.

For comparison, we use the following benchmark models.

\begin{itemize}

\item \textbf{Common a priori benchmark.}

For comparison across methods, the same training-only Poisson generalized linear model (GLM) is used as a common a priori benchmark:
\begin{equation}
\log m_{it}^{(0)}
=
\log e_{it}
+
\mathbf x_i^\trans\boldsymbol\beta_0.
\label{eq:application_common_glm}
\end{equation}

The Poisson GLM therefore provides a method-neutral baseline for the cross-model comparison.

\item \textbf{Multivariate Gaussian random-effects model.}

The first benchmark is a multivariate Gaussian random-effects credibility model motivated by \citet{DenuitGuillenTrufin2019} and \citet{CorradinEtAl2022}. The same fixed baseline vector $\mathbf x_i$ is used as the fixed-effect design in all four margins. Let
\[
\mathbf b_i
=
(b_{iN},b_{i1},b_{i2},b_{i3})^\trans
\]
be a policyholder-specific random-effect vector. We specify
\begin{align}
N_{it}\mid b_{iN}
&\sim
\operatorname{Poisson}
\left[
e_{it}\exp\{\mathbf x_i^\trans\boldsymbol\beta_N+b_{iN}\}
\right],
\label{eq:application_gaussian_claim}\\
S_{itj}\mid b_{ij}
&\sim
\mathcal N
\left(
\mathbf x_i^\trans\boldsymbol\beta_j+b_{ij},
\sigma_j^2
\right),
\qquad j=1,2,3,
\label{eq:application_gaussian_pc}
\end{align}
with
\begin{equation}
\mathbf b_i
\sim
\mathcal N_4(\mathbf0,\boldsymbol\Sigma_b),
\label{eq:application_gaussian_re}
\end{equation}
where $\boldsymbol\Sigma_b$ is unrestricted. The model is fitted using a Laplace approximation. For individual prediction, the joint random-effect distribution is updated using the available history and integrated over the resulting posterior distribution.

\item \textbf{Regression-adjusted multivariate linear credibility.}

The second benchmark is a regression-adjusted multivariate linear credibility model, providing a direct multivariate analogue of classical B\"uhlmann credibility. We apply the usual second-moment credibility construction to residuals after removing the effects of the traditional rating variables. Claim count is represented by its Pearson residual from the baseline Poisson regression, while each of the three telematics PCs is represented by the residual from its corresponding linear regression. Let $\mathbf R_{it}$ denote the resulting four-dimensional residual vector.

From the training policies, we estimate the within-policy covariance $\widehat{\mathbf V}$ and the between-policy covariance $\widehat{\mathbf A}$ of the residuals using the usual multivariate credibility decomposition. For a set of observed history coordinates $H$, the claim-frequency residual adjustment based on the two-year history is
\begin{equation}
\widehat r_{i,N}(H)
=
\widehat{\mathbf A}_{N,H}
\left(
\widehat{\mathbf A}_{H,H}
+\frac{1}{2}\widehat{\mathbf V}_{H,H}
\right)^{-1}
\overline{\mathbf R}_{i,H}^{(2)},
\label{eq:application_linear_credibility}
\end{equation}
where $\overline{\mathbf R}_{i,H}^{(2)}$ is the mean residual vector over years 1 and 2. The year-3 claim prediction is then obtained by adding this adjustment to the baseline prediction,
\begin{equation}
\widehat N_{i3}(H)
=
\widehat m_{i3}^{(0)}
+
\sqrt{\widehat m_{i3}^{(0)}}\,
\widehat r_{i,N}(H).
\label{eq:application_linear_backtransform}
\end{equation}
Different choices of $H$ allow the same fitted model to use claims only, telematics only, or the complete multivariate history. When $H$ contains only the claim coordinate, the procedure reduces to the familiar scalar linear credibility update.

\end{itemize}

All population-level models are fitted using all three years of the training policies. The validation cohort is used for model specification and selection, most importantly to choose the number of latent classes $K$ for the proposed DMMM. For each validation policyholder, years 1 and 2 form the individual experience history and year 3 is reserved for next-period prediction. The test cohort remains untouched until all modelling choices are fixed and is then evaluated under the same design. Across all three credibility models, the a priori information, the two-year claim--telematics history, and the frozen three-PC representation are held fixed.

Because this application models claim frequency rather than claim severity, the conditional mean claim count is used as the primary rating target. Under a fixed expected severity, relative changes in this quantity translate directly into relative pure-premium adjustments. However, predictive performance is not the only consideration; the analysis also examines how much information is gained from individual experience, how telematics changes policyholder-level assessments, and what is learned from the full predictive claim-count distribution.

\subsection{What does individual experience add?}
\label{subsec:application_prediction}

We begin with the central credibility question: what is gained by observing a policyholder's own experience? The focus in this subsection is the updating mechanism within each fitted model, rather than the absolute ranking of models. For each test policyholder, four information sets are considered:
\begin{enumerate}[label=(\roman*),leftmargin=2em]
    \item baseline information only;
    \item years 1--2 claim counts only;
    \item years 1--2 telematics scores only; and
    \item the complete years 1--2 claim-count and telematics history.
\end{enumerate}
For the DMMM and Gaussian models, these are alternative information sets applied to the same fitted joint model; only the components supplied to the posterior update are changed, and the population model is not refitted. For linear credibility, the corresponding subset of residual coordinates is used in \cref{eq:application_linear_credibility}. The baseline-only prediction is therefore model-specific, but every model uses the same baseline covariates. The common Poisson GLM is used separately in the cross-model comparison in the next subsection.

Let
$\operatorname{MSE}_0$,
$\operatorname{MSE}_C$,
$\operatorname{MSE}_T$, and
$\operatorname{MSE}_F$
denote test MSE using no individual history, claims only, telematics only, and full history. Define
\begin{align}
G_C
&=
100\left(1-\frac{\operatorname{MSE}_C}{\operatorname{MSE}_0}\right),
&
G_T
&=
100\left(1-\frac{\operatorname{MSE}_T}{\operatorname{MSE}_0}\right),
\label{eq:application_gain_CT}
\\
G_{T\mid C}
&=
100\left(1-\frac{\operatorname{MSE}_F}{\operatorname{MSE}_C}\right),
&
G_F
&=
100\left(1-\frac{\operatorname{MSE}_F}{\operatorname{MSE}_0}\right).
\label{eq:application_gain_full}
\end{align}

\begin{table}[htbp]
\centering
\caption{Test-set full-history MSE and within-model credibility gains.}
\label{tab:application_history}
\small
\begin{tabular}{llrrrrr}
\toprule
Scenario & Model
& $\operatorname{MSE}_F$
& $G_C$
& $G_T$
& $G_{T\mid C}$
& $G_F$ \\
\midrule
Moderate
& Dirichlet, $K=16$
& \textbf{0.061748} & 0.56\% & 1.65\% & 1.20\% & 1.75\% \\
& Gaussian random effects
& 0.066346 & -5.46\% & 0.14\% & 0.03\% & -5.43\% \\
& Linear credibility
& 0.062752 & -0.44\% & 0.17\% & 0.11\% & -0.32\% \\
\addlinespace
High
& Dirichlet, $K=16$
& \textbf{0.074156} & 1.71\% & 1.39\% & 0.45\% & 2.16\% \\
& Gaussian random effects
& 0.076007 & 0.86\% & 0.08\% & 0.13\% & 0.99\% \\
& Linear credibility
& 0.074517 & 2.39\% & 0.12\% & 0.08\% & 2.46\% \\
\bottomrule
\end{tabular}
\end{table}

The gain columns in \cref{tab:application_history} are the main focus here. For the selected DMMM, all four credibility gains are positive in both test cohorts. Telematics alone reduces MSE by $1.65\%$ and $1.39\%$ under moderate and high persistence, respectively, and still adds information after claim history has been incorporated, with further reductions of $1.20\%$ and $0.45\%$. Using the complete history reduces MSE by $1.75\%$ and $2.16\%$ relative to the corresponding a priori predictions. These gains are modest, as expected from only two years of individual experience, but they are consistently positive for the proposed model.

The conventional benchmarks use the same information less consistently. Under moderate persistence, the claims-only updates of both benchmarks increase test MSE. Under high persistence, all three methods benefit from the complete history, but the incremental telematics-after-claims gains are only $0.13\%$ for the Gaussian model and $0.08\%$ for linear credibility. Thus, among the fitted specifications considered here, the DMMM is the only method for which claims-only, telematics-only, and full-history updates all improve MSE in both controlled persistence settings.

The two scenarios also show why behavioural persistence and incremental information should not be treated as the same thing. The full-history gain for the DMMM is larger under high persistence, but the additional gain from telematics conditional on claims is smaller. When claims and behaviour both reflect a persistent underlying risk, part of their information can be redundant. What matters for the present application is that telematics retains incremental value in both environments, even after the direct claim signal has been observed.

Before comparing the models directly, we check whether this updating pattern is tied to the validation-selected value of $K$. \Cref{tab:application_K_sensitivity} reports the test-set gains for all four fitted values of $K$; these results are used only as sensitivity checks and do not alter the choice of $K=16$.

\begin{table}[htbp]
\centering
\caption{Sensitivity of test-set credibility gains to the number of latent
classes. An asterisk denotes the specification selected using validation
before the test cohort was opened.}
\label{tab:application_K_sensitivity}
\small
\begin{tabular}{llrrrr}
\toprule
Scenario & $K$
& $G_C$
& $G_T$
& $G_{T\mid C}$
& $G_F$ \\
\midrule
Moderate
& 12 & 0.53\% & 1.60\% & 1.15\% & 1.67\% \\
& $16^\ast$ & 0.56\% & 1.65\% & 1.20\% & 1.75\% \\
& 20 & 0.31\% & 1.00\% & 0.76\% & 1.07\% \\
& 24 & -0.06\% & 1.56\% & 1.32\% & 1.27\% \\
\addlinespace
High
& 12 & 1.80\% & 1.57\% & 0.58\% & 2.37\% \\
& $16^\ast$ & 1.71\% & 1.39\% & 0.45\% & 2.16\% \\
& 20 & 1.70\% & 1.88\% & 1.16\% & 2.84\% \\
& 24 & 1.53\% & 1.46\% & 0.91\% & 2.42\% \\
\bottomrule
\end{tabular}
\end{table}

The telematics-only, telematics-after-claims, and full-history gains remain positive for every $K\in\{12,16,20,24\}$ in both settings. Their magnitudes vary with $K$, but the direction of the telematics contribution is stable. The conclusion that telematics provides useful collateral experience therefore does not depend on the single validation-selected number of classes.

\subsection{Predictive performance under the complete history}
\label{subsec:application_fit}

The previous subsection asks a within-model question: does an individual policyholder's history improve on that model's baseline-only prediction? We now hold the information set fixed at the complete two-year claim--telematics history and ask a different question: how do the fitted credibility methods compare with one another when each has access to the same information? These are the same held-out test policies used above, and no model is refitted or reselected at this stage. \Cref{tab:application_gof} reports MSE, MAE, mean Poisson deviance, predictive log score, and portfolio mean calibration. The calibration ratio is
\[
\frac{\overline{\widehat\mu}}{\overline N},
\]
where $\overline{\widehat\mu}$ and $\overline N$ are the mean predicted and observed year-3 claim counts. A ratio of one indicates exact calibration of the portfolio mean. The prediction error reduction (PER) is the percentage MSE reduction relative to the common Poisson benchmark.

\begin{table}[htbp]
\centering
\caption{Test performance under the complete two-year history.}
\label{tab:application_gof}
\small

\textbf{Panel A: Moderate persistence}

\vspace{0.25em}
\begin{tabular}{lrrrrrr}
\toprule
Model
& MSE
& MAE
& Mean deviance
& Log score
& Calibration
& PER (\%) \\
\midrule
Common Poisson
& 0.062550 & 0.117121 & 0.3186 & -0.214863 & 1.195 & -- \\
Dirichlet, $K=16$
& \textbf{0.061748} & 0.117970 & \textbf{0.3100}
& \textbf{-0.210579} & 1.234 & \textbf{1.28} \\
Gaussian random effects
& 0.066346 & 0.127183 & 0.3234 & -0.216257 & 1.423 & -6.07 \\
Linear credibility
& 0.062752 & \textbf{0.116956} & -- & -- & 1.206 & -0.32 \\
\bottomrule
\end{tabular}

\vspace{1em}
\textbf{Panel B: High persistence}

\vspace{0.25em}
\begin{tabular}{lrrrrrr}
\toprule
Model
& MSE
& MAE
& Mean deviance
& Log score
& Calibration
& PER (\%) \\
\midrule
Common Poisson
& 0.076399 & 0.132503 & 0.3724 & -0.256453 & 0.991 & -- \\
Dirichlet, $K=16$
& \textbf{0.074156} & 0.131650 & \textbf{0.3518}
& \textbf{-0.246619} & 1.023 & \textbf{2.94} \\
Gaussian random effects
& 0.076007 & 0.137961 & 0.3579 & -0.250346 & 1.164 & 0.51 \\
Linear credibility
& 0.074517 & \textbf{0.129890} & -- & -- & 0.987 & 2.46 \\
\bottomrule
\end{tabular}

\end{table}

When each credibility model is supplied with the complete two-year history, the selected DMMM has the lowest test MSE point estimate in both persistence settings, corresponding to PERs of $1.28\%$ and $2.94\%$ relative to the common Poisson model. It also has the lowest mean Poisson deviance and the best predictive log score among the probabilistic models. Linear credibility remains competitive for point prediction and has the lowest MAE in both settings, but it does not define a full predictive count distribution and produces three negative predictions in the moderate-persistence cohort. Thus, the proposed model's empirical advantage is not uniform dominance on every loss measure; rather, it combines competitive point prediction with a coherent probabilistic claim distribution and the more consistent multivariate updating seen above.

Calibration is close to one for the common Poisson, DMMM, and linear models under high persistence, while the Gaussian model overpredicts the portfolio mean. The moderate-persistence test cohort has a lower realised claim frequency than anticipated by all fitted models, including the common Poisson benchmark. The DMMM therefore overpredicts the portfolio mean by about $23\%$ in this cohort, although the Gaussian model overpredicts by more. This common portfolio-level shift should be kept in mind when interpreting the moderate-persistence predictive scores.

A distributional credibility model provides more than the updated conditional mean: the posterior class weights determine the complete next-period claim-count distribution. Because the DMMM and Gaussian models also provide full predictive distributions, we can compare more than their point predictions. \Cref{fig:application_predictive_distribution} compares the observed and fitted test-set probabilities of zero claims, one claim, and two or more claims. Under high persistence, the common Poisson benchmark closely matches the overall claim-count proportions, while the DMMM places slightly more mass on the multi-claim tail. Under moderate persistence, all three probabilistic models underpredict the zero-claim proportion and overpredict one-claim outcomes. The Gaussian model places substantially more mass on two or more claims ($1.01\%$ versus $0.44\%$ observed), whereas the DMMM is much closer at $0.53\%$.

\begin{figure}[htbp]
\centering
\includegraphics[width=\textwidth]{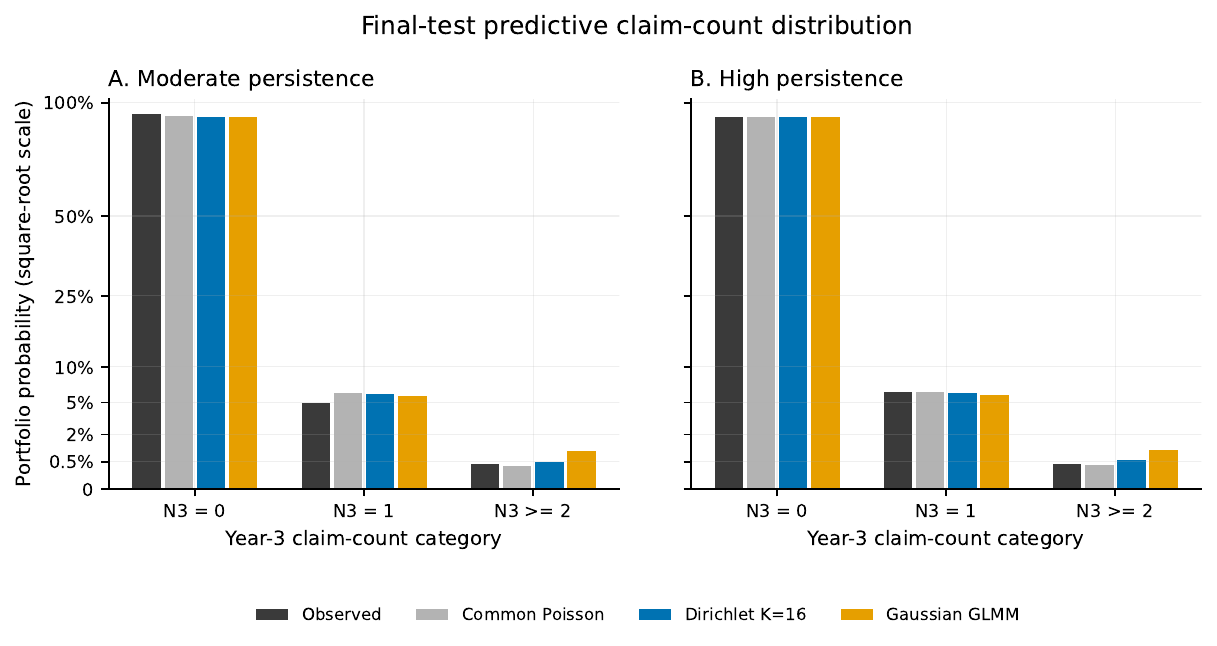}
\caption{Observed and fitted year-3 claim-count distributions in the final
test cohort after two years of history. The ordinate uses a square-root scale
to make the low-probability multi-claim category visible. The linear
credibility benchmark is omitted because it supplies a predictive mean rather
than a full predictive count distribution.}
\label{fig:application_predictive_distribution}
\end{figure}

\subsection{How experience changes claim-frequency assessments}
\label{subsec:application_corrections}

The preceding results show, first, that telematics adds incremental information within the proposed model and, second, that the resulting full-history predictions remain competitive with the benchmark methods. Portfolio-level scores, however, still do not show what the joint credibility update does to an individual policyholder. This matters because the intended role of telematics is not a common behavioural surcharge but additional differentiation among policies whose conventional claim histories may look similar. For policyholder $i$, let
\(
\widehat\mu_i^{(0)},
%\qquad
\widehat\mu_i^{(C)},
%\qquad
\widehat\mu_i^{(F)}
\)
denote the year-3 expected claim counts under the selected $K=16$ Dirichlet model using baseline information only, claim history only, and the complete claim--telematics history, respectively. Define
\begin{equation}
\mathrm{CF}_i^{(C)}
=
\frac{\widehat\mu_i^{(C)}}{\widehat\mu_i^{(0)}},
\qquad
\mathrm{CF}_i^{(F)}
=
\frac{\widehat\mu_i^{(F)}}{\widehat\mu_i^{(0)}}.
\label{eq:application_corrections}
\end{equation}

\Cref{fig:application_claim_full_corrections} displays these corrections in the final test cohort after grouping policyholders by claims observed during years 1 and 2. Claim history provides the first layer of a posteriori differentiation, while telematics creates additional spread within the same recent-claim group. The behavioural information does not act in one direction: depending on the observed history, it can attenuate or reinforce the claims-based adjustment. This is the policy-level implication of updating one shared latent risk composition with both experience sources.

\begin{figure}[htbp]
\centering
\includegraphics[width=\textwidth]{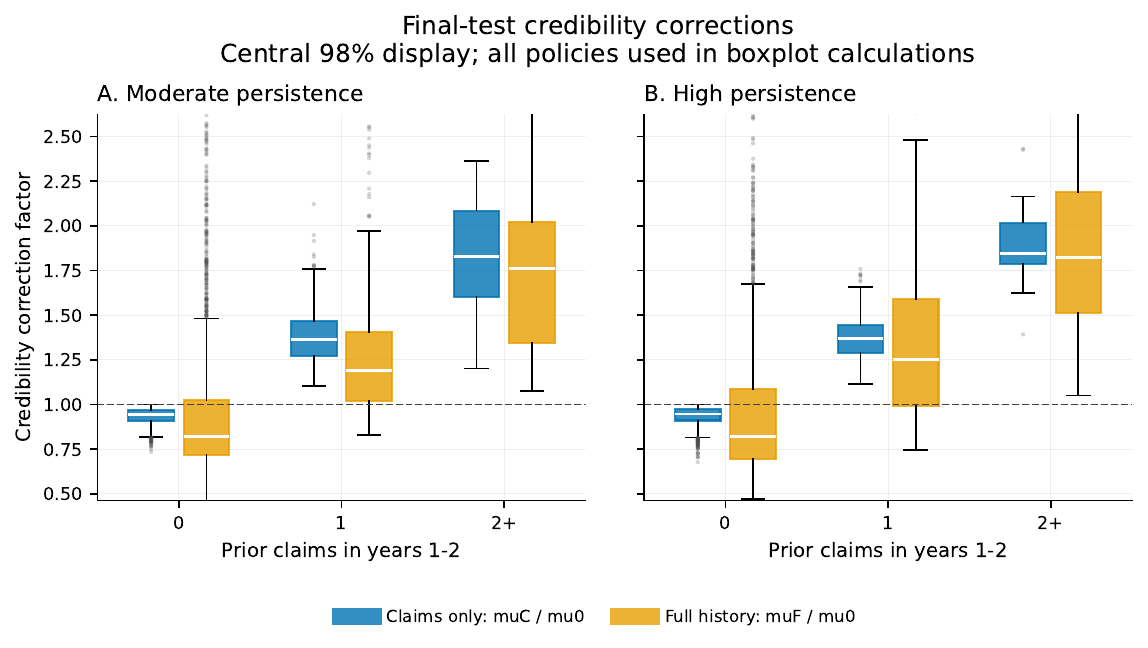}
\caption{A posteriori claim-frequency correction factors in the final test
cohort under the selected $K=16$, $\omega_{\mathrm{tel}}=1$ DMMM.
The claims-only correction is $\mathrm{CF}_i^{(C)}$, while the full-history correction
is $\mathrm{CF}_i^{(F)}$. Policyholders are grouped using claims observed in years 1--2
only. The horizontal reference line denotes no change from the a priori
prediction.}
\label{fig:application_claim_full_corrections}
\end{figure}

\subsubsection{Illustrative individual updates}

\Cref{fig:application_representative_policies} gives a policy-level view of the updating mechanism of the DMMM. The examples are selected using years 1--2 information only and are deliberately matched so that telematics, rather than different a priori risk levels, drives the visual contrast. Within the zero-claim and exactly-one-claim groups, we first identify policies in the lower and upper tails of the telematics-after-claims correction
$\widehat\mu_i^{(F)}/\widehat\mu_i^{(C)}$. A lower-tail and upper-tail policy are then paired to have similar a priori and claims-only predictions. Year-3 outcomes and prediction errors play no role in selecting the examples.

\begin{figure}[htbp]
\centering
\includegraphics[width=\textwidth]{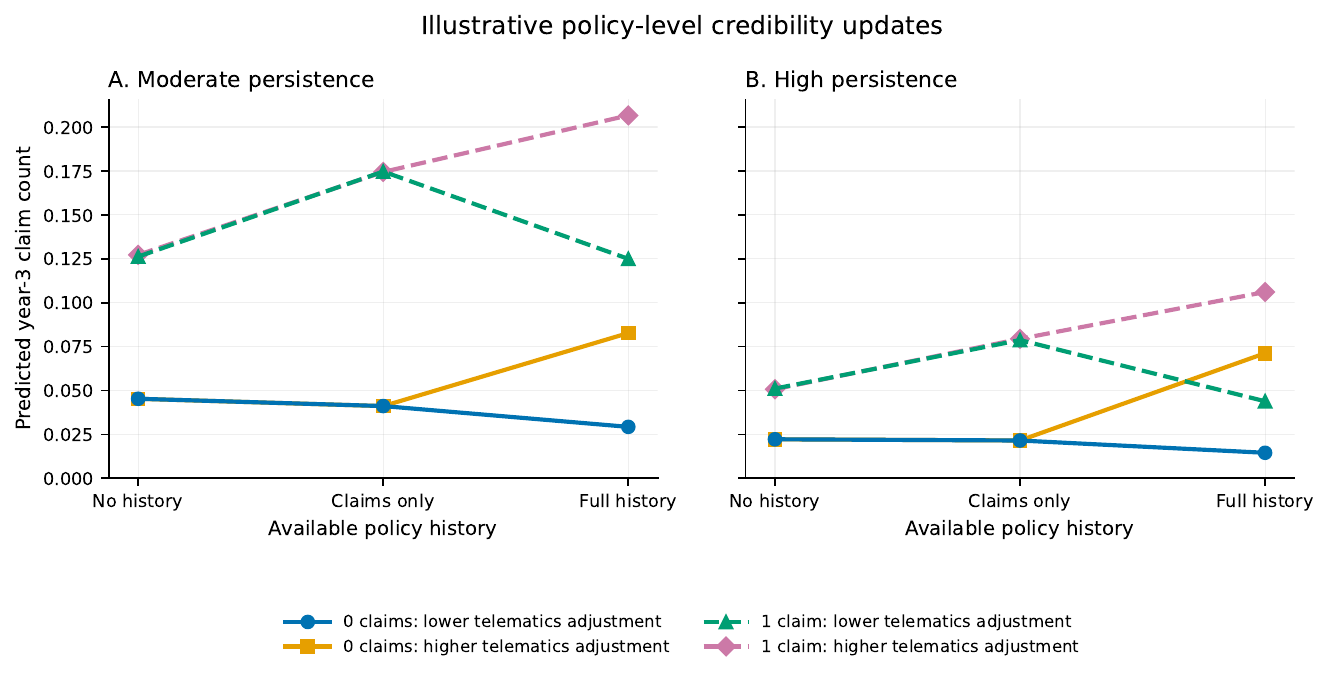}
\caption{Illustrative policy-level credibility updates in the final test
cohort under the selected $K=16$ DMMM. Within each persistence
setting, the zero-claim and one-claim pairs are chosen to have similar
a priori and claims-only predictions but contrasting
telematics-after-claims corrections. Selection uses only years 1--2
information; year-3 outcomes and prediction errors are not used.}
\label{fig:application_representative_policies}
\end{figure}

The matched examples make the incremental role of telematics explicit. Within each claims-history group, the paired policies begin with nearly the same a priori assessment and remain close after claims alone are incorporated. Once telematics is added, their predictions separate in opposite directions: one receives a downward behavioural adjustment and the other an upward adjustment. The pattern appears under both persistence settings and for both zero-claim and one-claim histories. Thus, the same recent claim experience can remain compatible with different latent risk--behaviour patterns, and the telematics history helps determine which classes receive greater posterior weight for the individual policy.

\section{Conclusion}\label{sec:conclusion}

Multivariate credibility is increasingly relevant as insurers observe several outcomes and auxiliary signals that may jointly reveal the underlying risk of a policyholder. This paper develops a static credibility model in which that latent risk is represented by a policyholder-specific Dirichlet composition over a finite collection of multivariate risk classes. The composition remains fixed over the observation horizon, while different periods may express different classes and the observed history progressively reveals the underlying composition. Baseline characteristics determine the a priori classification, and claims, losses, behavioural measures, or other outcomes jointly update the posterior composition. The shared latent structure also induces dependence across the response components without requiring a separate copula specification. Most importantly, the resulting posterior predictive distribution retains an exact credibility decomposition between portfolio and experience components. The expected-loss premium therefore retains the familiar credibility-weighted form, while the complete predictive distribution also provides the basis for quantiles, tail probabilities, nonlinear premium principles, and other actuarial functionals. An EM-type estimation procedure, with optional response weighting for auxiliary signals, provides a practical route for fitting the model.

The simulation study illustrates the flexibility behind this construction. Even when the fitted experts are deliberately simpler than the data-generating mechanism, richer collections of risk classes reproduce complex marginal behaviour and substantial dependence among frequency, severity, and losses. Repeated experience then progressively refines the policyholder-specific composition, illustrating how portfolio-level flexibility and individual credibility learning operate within the same model.

The telematics application provides the more practical assessment of the framework. Using an untouched test portfolio, the proposed model attained the lowest test MSE in both persistence settings and the best deviance and predictive log score among the probabilistic models, while linear credibility remained competitive for point prediction. More importantly, the information-ablation analysis showed that both claims and telematics contributed useful a posteriori information under the proposed model. Behavioural information continued to improve prediction after recent claims had already been incorporated, and this pattern remained stable across the fitted numbers of latent classes. At the individual level, telematics could either reinforce or attenuate a claims-based adjustment, allowing policyholders with similar baseline characteristics and the same recent claim count to receive different updated risk assessments. This illustrates the main practical role of the multivariate construction: richer policyholder experience can be incorporated within a familiar credibility framework rather than treated as a separate pricing mechanism.

Several extensions follow naturally. The present formulation is intentionally static, so a dynamic credibility model in which the latent composition evolves over time would be useful when longer histories contain evidence of behavioural change, recency, or regime transitions. Allowing the Dirichlet concentration to depend on policyholder characteristics would provide a second extension by permitting the strength of persistence and credibility to vary across risks. Richer within-class dependence structures and applications to genuinely longitudinal insurance portfolios are also natural directions. More broadly, the framework suggests that classical credibility can be retained even when the object being learned is a complex multivariate predictive distribution rather than a single premium or moment.

\section*{Declaration of competing interest}

The authors declare no conflict of interest.

\section*{Financial Support}

This research did not receive any specific grant from funding agencies
in the public, commercial, or not-for-profit sectors.

\section*{Declaration of generative AI and AI-assisted technologies in the manuscript preparation process}

During the preparation of this work, the authors used ChatGPT 5.6 Sol (OpenAI) to assist with language editing, improving clarity and organisation, and reviewing the presentation of mathematical and methodological arguments. After using this tool, the authors reviewed and edited the content as needed and take full responsibility for the content of the publication.

\section*{Data availability}

The simulation study uses data generated from the mechanism specified in \Cref{sec:dmm-simulation-study}. The telematics application uses the publicly available synthetic motor insurance portfolio of \citet{SoBoucherValdez2021}. The longitudinal pseudo-panels are constructed from that portfolio using the procedure described in \Cref{subsec:application_panel}; code for this construction is available with the accompanying Julia implementation.

\bibliographystyle{apalike}
\bibliography{main}

\appendix

\section{Computing the E-step}\label{subsec:computing_estep}

This appendix gives practical methods for computing the two quantities required by the weighted E-step in \cref{prop:EM_estep}. For policyholder $i$, these are
\[
\widetilde\tau_{itk}
=
\Prb_{\boldsymbol\omega}(C_{it}=k\mid \mathbf Y_i),
\qquad
\widetilde\xi_{ik}
=
\E_{\boldsymbol\omega}(\log p_{ik}\mid \mathbf Y_i),
\]
where the weighted expert score $\widetilde g_{itk}$ replaces the ordinary likelihood contribution $g_{itk}$. The calculations below therefore apply to both the ordinary likelihood and the weighted criterion. We suppress the policyholder index and write
\[
\alpha_k=\phi\pi_k,
\qquad
\alpha_0=\sum_{k=1}^K\alpha_k=\phi,
\]
with history length $T$.

Three implementations are useful in different regimes. For short histories, the \emph{grouping formula} is the simplest exact method: it replaces the $K^T$ class paths by a finite sum over partitions of the $T$ periods and gives both $\widetilde\tau$ and $\widetilde\xi$ by differentiation. For larger $T$, an exact \emph{count-state dynamic program} is preferable when the lattice of class-count vectors remains manageable; it avoids both path and set-partition enumeration. When both $T$ and $K$ make these exact calculations expensive, a collapsed \emph{Monte Carlo} E-step provides a scalable approximation.

\subsection{Closed-form calculation by grouping}\label{subsec:grouped_estep}

Let $\Pi_T$ be the set of partitions of $\{1,\ldots,T\}$. For each nonempty subset $A\subseteq\{1,\ldots,T\}$ define
\begin{equation}\label{eq:group_block}
h_A(\boldsymbol\alpha)
=
(|A|-1)!
\sum_{k=1}^K
\alpha_k
\prod_{t\in A}\widetilde g_{tk},
\qquad
\mathcal W_T(\boldsymbol\alpha)
=
\sum_{\mathcal A\in\Pi_T}
\prod_{A\in\mathcal A}h_A(\boldsymbol\alpha).
\end{equation}
Then the exact integrated weighted criterion is
\begin{equation}\label{eq:grouped_integrated_criterion}
\widetilde L_T
=
\frac{\mathcal W_T(\boldsymbol\alpha)}{(\alpha_0)_T},
\qquad
(\alpha_0)_T
=
\frac{\Gamma(\alpha_0+T)}{\Gamma(\alpha_0)}.
\end{equation}
Consequently,
\begin{align}
\widetilde\tau_{tk}
&=
\widetilde g_{tk}\,
\frac{\partial}{\partial \widetilde g_{tk}}
\log \mathcal W_T(\boldsymbol\alpha),
\label{eq:tau_derivative}\\
\widetilde\xi_k
&=
\psi(\alpha_k)-\psi(\alpha_0+T)
+
\frac{\partial}{\partial\alpha_k}
\log \mathcal W_T(\boldsymbol\alpha).
\label{eq:xi_normalizer_derivative}
\end{align}
Thus, the same finite polynomial gives both E-step quantities. The derivatives may be obtained analytically or by automatic differentiation. In particular,
\[
\frac{\partial h_A}{\partial\alpha_k}
=
(|A|-1)!\prod_{t\in A}\widetilde g_{tk},
\qquad
\frac{\partial h_A}{\partial\widetilde g_{tk}}
=
\mathbf 1_{\{t\in A\}}
(|A|-1)!\alpha_k
\prod_{u\in A\setminus\{t\}}\widetilde g_{uk},
\]
and derivatives of $\mathcal W_T$ follow by the product rule. If $\widetilde g_{tk}=0$, use the polynomial derivative directly rather than a logarithmic derivative.

For completeness, the identity follows from the Gamma representation of a Dirichlet vector. Let $G_k\sim\operatorname{Gamma}(\alpha_k,1)$ independently and set
\[
U_t=\sum_kG_k\widetilde g_{tk}.
\]
The joint cumulant for periods in $A$ is $h_A(\boldsymbol\alpha)$, so the moment--cumulant formula gives
\[
\E\left(\prod_tU_t\right)=\mathcal W_T(\boldsymbol\alpha).
\]
Since $(G_k/G_{+})_k$ is Dirichlet, is independent of
$G_{+}=\sum_kG_k$, and
$\E(G_{+}^T)=(\alpha_0)_T$,
\cref{eq:grouped_integrated_criterion} follows. Differentiation gives \cref{eq:tau_derivative,eq:xi_normalizer_derivative}.

This method is especially convenient for very short histories: the subset scores cost $O(K2^T)$ to construct and can be reused. Its limitation is the Bell-number growth of $\Pi_T$, so exactness does not imply scalability in $T$. The partition blocks are only a computational device and should not be interpreted as periods sharing the same realised class.

For very short histories, the partition polynomial can be written explicitly.
Let $\mathfrak P_{\mathbf a}([T])$ denote the set of partitions of
$[T]=\{1,\ldots,T\}$ whose block sizes are given by the integer partition
$\mathbf a$. With $h_A$ defined in \cref{eq:group_block},
\begin{align*}
\mathcal W_1
&=h_1,\\
\mathcal W_2
&=h_1h_2+h_{12},\\
\mathcal W_3
&=h_1h_2h_3
+\sum_{\mathcal A\in\mathfrak P_{(2,1)}([3])}
 \prod_{A\in\mathcal A}h_A
+h_{123},\\
\mathcal W_4
&=\prod_{t=1}^4h_t
+\sum_{\mathcal A\in\mathfrak P_{(2,1,1)}([4])}
 \prod_{A\in\mathcal A}h_A
+\sum_{\mathcal A\in\mathfrak P_{(2,2)}([4])}
 \prod_{A\in\mathcal A}h_A\\
&\quad
+\sum_{\mathcal A\in\mathfrak P_{(3,1)}([4])}
 \prod_{A\in\mathcal A}h_A
+h_{1234},\\
\mathcal W_5
&=\prod_{t=1}^5h_t
+\sum_{\mathbf a\in\{(2,1,1,1),(2,2,1),(3,1,1),
(3,2),(4,1)\}}
\ \sum_{\mathcal A\in\mathfrak P_{\mathbf a}([5])}
\prod_{A\in\mathcal A}h_A
+h_{12345}.
\end{align*}
Substitution into
\[
\widetilde\tau_{tk}
=
\widetilde g_{tk}
\frac{\partial}{\partial\widetilde g_{tk}}
\log\mathcal W_T,
\qquad
\widetilde\xi_k
=
\psi(\alpha_k)-\psi(\alpha_0+T)
+
\frac{\partial}{\partial\alpha_k}\log\mathcal W_T
\]
gives the corresponding exact closed-form E-step quantities for
$T\leq5$.

\subsection{Exact count-state dynamic programming}\label{subsec:count_dp}

A second exact approach groups partial class paths by their class-count vector. Let $F_t(\mathbf n)$ be the total integrated score of all paths through the first $t$ periods having counts $\mathbf n=(n_1,\ldots,n_K)$, where $|\mathbf n|=t$. With $F_0(\mathbf0)=1$,
\begin{equation}\label{eq:count_dp_forward}
F_t(\mathbf n)
=
\sum_{k:n_k>0}
F_{t-1}(\mathbf n-\mathbf e_k)
\frac{\alpha_k+n_k-1}{\alpha_0+t-1}
\widetilde g_{tk},
\qquad
\widetilde L_T
=
\sum_{|\mathbf n|=T}F_T(\mathbf n).
\end{equation}
For the backward pass set $R_T(\mathbf n)=1$ for $|\mathbf n|=T$ and, for $|\mathbf n|=t-1$,
\begin{equation}\label{eq:count_dp_backward}
R_{t-1}(\mathbf n)
=
\sum_{k=1}^K
\frac{\alpha_k+n_k}{\alpha_0+t-1}
\widetilde g_{tk}\,
R_t(\mathbf n+\mathbf e_k).
\end{equation}
The exact responsibilities are
\begin{equation}\label{eq:count_dp_tau}
\widetilde\tau_{tk}
=
\frac{1}{\widetilde L_T}
\sum_{|\mathbf n|=t-1}
F_{t-1}(\mathbf n)
\frac{\alpha_k+n_k}{\alpha_0+t-1}
\widetilde g_{tk}
R_t(\mathbf n+\mathbf e_k),
\end{equation}
while
\begin{equation}\label{eq:count_dp_xi}
\widetilde\xi_k
=
\sum_{|\mathbf n|=T}
\frac{F_T(\mathbf n)}{\widetilde L_T}
\left\{
\psi(\alpha_k+n_k)-\psi(\alpha_0+T)
\right\}.
\end{equation}

The recursion is exact because, after integrating out $\bp$, the next-class probability depends on the preceding path only through its counts. No additional multinomial coefficient is needed because the forward recursion has already summed over all compatible paths. Layer $t$ contains $\binom{t+K-1}{K-1}$ states. Hence this approach can be much cheaper than $K^T$ path enumeration when the count lattice remains moderate, but it also becomes expensive when both $T$ and $K$ are large. Scaled or log-space recursions should be used when underflow is a concern.

\subsection{Monte Carlo E-step}\label{subsec:mc_estep}

When the exact state spaces are too large, integrate out $\bp$ and sample the class sequence from its collapsed generalized posterior. A single-site Gibbs update is
\begin{equation}\label{eq:collapsed_gibbs}
\Prb_{\boldsymbol\omega}
(C_t=k\mid\mathbf C_{-t},\mathbf Y,\Theta)
\propto
\{\alpha_k+n_{k,-t}\}\widetilde g_{tk},
\end{equation}
where $n_{k,-t}$ excludes period $t$. For retained draws
$\mathbf C^{(1)},\ldots,\mathbf C^{(M)}$, with class counts $n_k^{(b)}$,
\begin{align}
\widetilde\tau_{tk}
&\approx
\frac{1}{M}
\sum_{b=1}^M
\mathbf 1\{C_t^{(b)}=k\},
\label{eq:MC_tau}\\
\widetilde\xi_k
&\approx
\frac{1}{M}
\sum_{b=1}^M
\left[
\psi\{\alpha_k+n_k^{(b)}\}
-
\psi(\alpha_0+T)
\right].
\label{eq:MC_xi}
\end{align}
These averages can be inserted directly into the EM/ECM updates. In Monte Carlo EM, simulation effort should increase as the algorithm approaches convergence, and Monte Carlo error should be monitored rather than equating the retained-draw count with effective information. This matters especially for small $\phi$, where the posterior may concentrate near almost-pure compositions and single-site Gibbs moves can mix slowly. Multiple dispersed chains, effective-sample-size checks, and block moves when needed are useful diagnostics; ascent-based MCEM gives a formal alternative for controlling Monte Carlo optimisation error \citep{CaffoJankJones2005}.

\newpage
\section{Descriptive statistics for the numerical application}
\label{sec:telematics-stats}

\begin{table}[htbp]
\centering
\caption{Descriptive statistics for the constructed three-year pseudo-panels.}
\label{tab:application_summary_appendix}
\small
\begin{tabular}{llrr}
\toprule
Variable & Statistic
& Moderate persistence
& High persistence \\
\midrule

Claim count
& Mean (SD)
& 0.0717 (0.2766)
& 0.0719 (0.2785) \\
& $N=0$ (\%)
& 93.3
& 93.3 \\
& $N=1$ (\%)
& 6.2
& 6.2 \\
& $N\geq2$ (\%)
& 0.5
& 0.5 \\
\addlinespace

Exposure (100 km)
& Mean (SD)
& 105.56 (87.44)
& 105.66 (87.86) \\
& Median [Q1, Q3]
& 83.76 [39.34, 150.98]
& 83.97 [39.56, 151.08] \\
\addlinespace

PC1
& Mean (SD)
& -0.017 (2.690)
& -0.016 (2.687) \\
& Median [Q1, Q3]
& -0.252 [-0.444, 0.054]
& -0.258 [-0.446, 0.050] \\

PC2
& Mean (SD)
& -0.007 (2.049)
& -0.003 (2.084) \\
& Median [Q1, Q3]
& -0.071 [-0.103, -0.021]
& -0.066 [-0.100, -0.013] \\

PC3
& Mean (SD)
& -0.004 (1.916)
& 0.000 (2.009) \\
& Median [Q1, Q3]
& -0.052 [-0.085, 0.002]
& -0.061 [-0.089, 0.000] \\
\addlinespace

Baseline driver age
& Mean (SD)
& 45.31 (13.75)
& 45.31 (13.75) \\
& Median [Q1, Q3]
& 45.00 [34.00, 55.00]
& 45.00 [34.00, 55.00] \\

Baseline vehicle age
& Mean (SD)
& 5.08 (4.01)
& 5.08 (4.01) \\
& Median [Q1, Q3]
& 4.00 [2.00, 8.00]
& 4.00 [2.00, 8.00] \\
\addlinespace

Male
& \%
& 59.6
& 59.6 \\
\addlinespace

Vehicle use: Private
& \%
& 17.6
& 17.6 \\
Vehicle use: Commute
& \%
& 50.9
& 50.9 \\
Vehicle use: Farmer
& \%
& 0.5
& 0.5 \\
Vehicle use: Commercial
& \%
& 30.9
& 30.9 \\

Urban
& \%
& 89.8
& 89.8 \\

\bottomrule
\end{tabular}

\vspace{0.4em}
\begin{minipage}{0.94\textwidth}
\footnotesize
\emph{Note:} Annual claim count, exposure, and principal-component scores are
summarised over 45,000 policy-year observations in each persistence setting.
Baseline characteristics are summarised once per policyholder
($15{,}000$ policyholders per setting). The baseline policyholders are common
to the two persistence settings, so their baseline summaries are identical.
The principal components are based on transformations estimated from the
training data only.
\end{minipage}
\end{table}

\end{document}